\newtheorem{theorem}{Theorem}
\newtheorem{definition}{Definition}
\newtheorem*{proof}{Proof}
\begin{document}
% \linenumbers

\DeclareRobustCommand\comment[1]{}
\setlength{\textfloatsep}{5pt plus 1.0pt minus 2.0pt}
\setlength{\floatsep}{5pt plus 1.0pt minus 2.0pt}

%\title{\fontsize{23}{30}\selectfont Computing-Aware Routing for LEO Satellite Networks: A Transmission and Computation Integration Approach}
%\title{Computation Offloading for LEO Satellite Networks: A Metagraph-Based Computation and Transmission Fusion Approach}
\title{\fontsize{23.5}{30}\selectfont Network-Wide Task Offloading With LEO Satellites: A Computation and Transmission Fusion Approach}
%dynamic network Based Computing-Aware Routing for LEO Satellite Networks}

\author{Jiaqi Cao,\ %~\IEEEmembership{Member,~IEEE,} 
Shengli Zhang,~\IEEEmembership{Senior Member,~IEEE,}\\
Qingxia Chen,\ %~\IEEEmembership{Member,~IEEE}\\
Houtian Wang,\ %~\IEEEmembership{Member,~IEEE}
Mingzhe Wang,\ %~\IEEEmembership{Member,~IEEE}
Naijin Liu\ %~\IEEEmembership{Member,~IEEE}

\thanks{Naijin Liu is the corresponding author.}  
%\thanks{Part of this work has been presented at 2018 IEEE Globecom Workshops.}       
\thanks{Jiaqi Cao and Shengli Zhang are with Shenzhen University, Shenzhen, 518052, P.R. China (e-mail: jiaqicao@szu.edu.cn; zsl@szu.edu.cn).}
\thanks{Qingxia Chen, Houtian Wang and Naijin Liu are with Qian Xuesen Laboratory of Space Technology, China Academy of Space Technology (e-mail: chenqingxia@qxslab.cn; wanghoutian@qxslab.cn; liunaijin@qxslab.cn).}
\thanks{Mingzhe Wang is with Tsinghua University (e-mail: wmzhere@gmail.com).}
%\thanks{This work is supported in part by the China Mobile Research Institute (Contract [2017] No.224), the National Natural Science Foundation of China (No. 61631004), the National Science and Technology Major Project of China under Grant 2016ZX03001017 and the China Scholarship Council.}
}
\maketitle
\thispagestyle{fancy}

\begin{abstract}
Computing tasks are ubiquitous in space missions. Conventionally, these tasks are offloaded to ground servers for computation, where the transmission of raw data on satellite-to-ground links severely constrains the performance. To overcome this limitation, recent works offload tasks to visible low-earth-orbit (LEO) satellites. However, this offloading scheme is difficult to achieve good performance in actual networks with uneven loads because visible satellites over hotspots tend to be overloaded. Therefore, it is urgent to extend the offloading targets to the entire network.

To address the network-wide offloading problem, we propose a metagraph-based computation and transmission fusion offloading scheme for multi-tier networks. Specifically, virtual edges, between the original network and its duplicate, are generated to fuse computation and transmission in single-tier networks. Then, a metagraph is employed to integrate the fusion in multi-tier networks. After assigning appropriate edge weights to the metagraph, the network-wide offloading problem can be solved by searching the shortest path.

In addition, we apply the proposed scheme to solve the spatial computation offloading problem in a real multi-tier network. The superiority of the proposed scheme over two benchmark schemes are proved theoretically and empirically. 

Simulation results show that the proposed scheme decreases the weighted average delay by up to 87.51\% and 18.70\% compared with the ground offloading scheme and the visible offloading scheme, respectively.

\end{abstract}

\begin{IEEEkeywords}
LEO satellite network, multi-tier network, task offloading, computation and transmission fusion, metagraph. 
\end{IEEEkeywords}
\IEEEpeerreviewmaketitle%

\section{Introduction}\label{Introduction}
\IEEEPARstart{C}{omputing} tasks are becoming increasingly complex in space
missions. Take remote sensing (RS) for example, the advancements in image processing and target recognition techniques, especially the application of machine learning techniques~\cite{7486259}, have led to a rapid increase in computational requirements~\cite{5678816,917889}. In addition, sensing technology improvements, such as hyperspectral image (HSI)~\cite{8113122}, enable increased data precision at the cost of the huge data volume of remote sensing images.
The growth in both computational requirements and data volume of computing tasks, especially latency-sensitive tasks, create significant challenges to conventional task offloading schemes.

%【9651919：这篇文章非常好，提供了很多RS数据卸载方式，可以有更多的对比算法】
The \textit{ground-server-based computation offloading} scheme (hereinafter referred to the ground offloading scheme)~\cite{9210567,9651919,7805169,9013462} is one of the most popular conventional computing task offloading schemes. In this scheme, low-earth-orbit (LEO) satellites act as bent pipes to route raw data of RS tasks to ground servers for computation. While the ground servers are powerful in computing capability, transmitting large amounts of raw data across the network can be a significant bottleneck: perturbed by the atmosphere frequently~\cite{alonso2004performance}, the transmission rate of satellite-to-ground links (SGLs) can be as low as 20 Mbps on state-of-the-art satellites~\cite{yost2021state}. To overcome the limitations of the ground offloading scheme, researchers have set their sights on onboard computing.

The \textit{visible-LEO-satellite-based computation offloading} scheme (hereinafter referred to the visible offloading scheme) \cite{ZHAO202294,9383778,Kothandhapani2020} is a typical task offloading scheme featuring onboard computing. In this scheme, tasks are first transmitted from the source to its visible LEO satellites directly; next, the LEO satellites process the tasks and route the computing results to the ground.
This scheme can usually achieve impressive performance compared with ground offloading for the following reasons.
\comment{Citations are needed for the reasons.}
\ding{172} LEO satellites are geographically closer to RS satellites than ground servers, removing the constraints of data rates for SGLs.
%\ding{173} LEO satellites can provide high-performance onboard computing with lower cost when compared with other spacecrafts.【不是比地面服务器性能好的原因】
\ding{173} Onboard computing greatly reduces the amount of data that needs to be transmitted across the network, which lowers the transmission delay in turn.

\begin{figure}[b]
	\centering
	\includegraphics[width=0.48\textwidth]{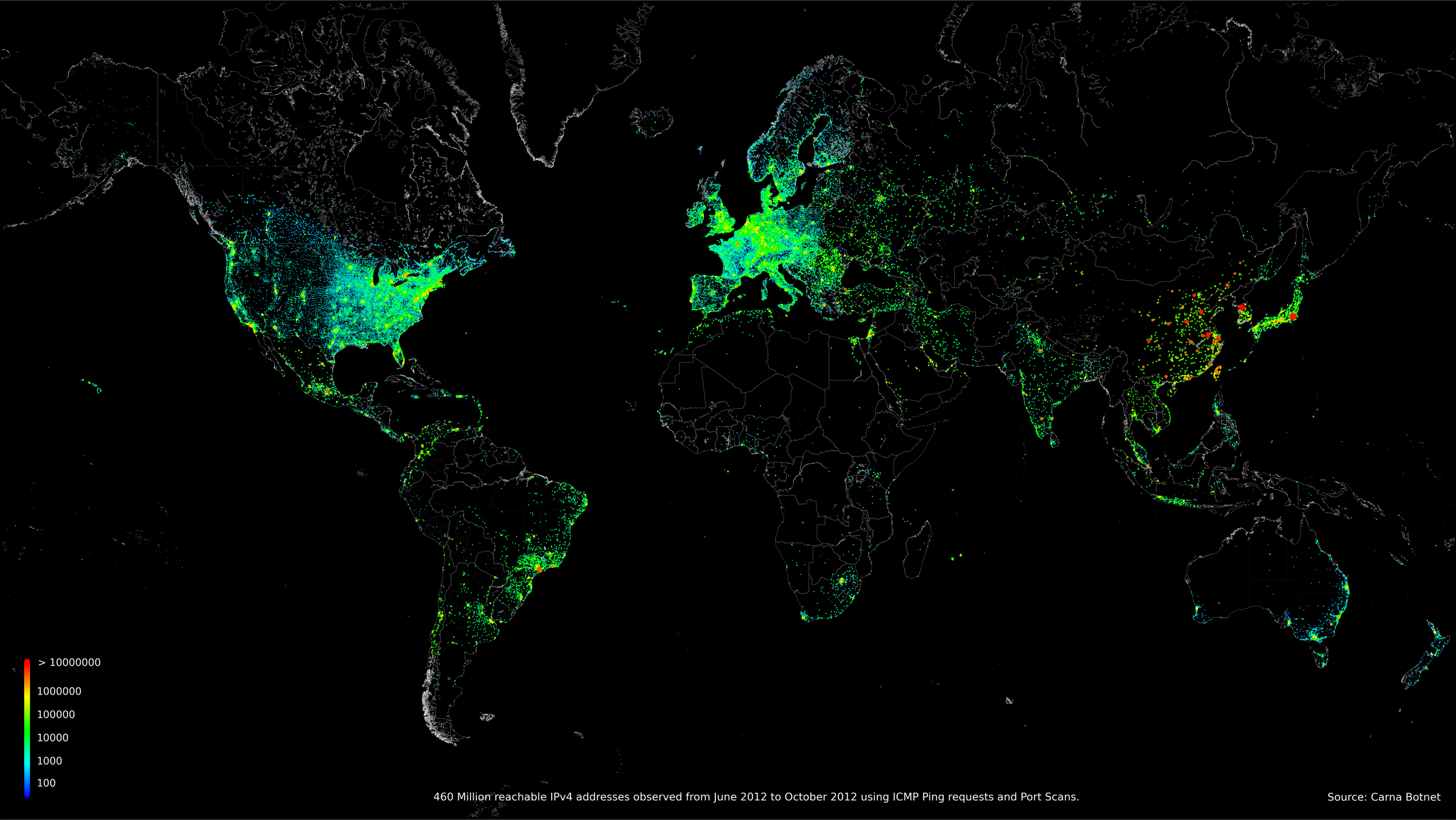}\\
	\begin{center}
		\caption{Global distribution of Internet users according to IPv4 address usage~\cite{InternetConsensus2012}.}\label{WorldMap}
	\end{center}
\vspace{-1.0em}
\end{figure}

However, in reality, the visible offloading scheme is unlikely to achieve the desired performance: computing resources are scarce over hot-spot areas because the visible satellites are usually overloaded.
The root cause of this phenomenon is the uneven distribution of tasks across the world.
As Fig.~\ref{WorldMap} shows, most human activities are located in a few hotspots such as cities and ports. Although lots of computing tasks are generated there in each time slot, only a limited number of satellites are visible and available for offloading during each time slot. In contrast, the major part of computing resources located over non-hotspot areas remain unused.

Network-wide computation offloading could be a promising approach to address the hot-spot challenge.
First, the uneven network load can be balanced by forward tasks to invisible satellites for computation. In this way, the offloading targets are extended to the entire network. It worth noting that the network-wide offloading is complementary to the other two offloading schemes introduced above: when computing resources are abundant, visible satellites can be chosen to reduce delay; when visible satellites are overloaded, tasks can be forward to invisible satellites; for extremely overloaded cases, the ground servers can be chosen as the last resort.
%In essence, the network-wide offloading scheme could be formulated as an offloading target selection problem over multi-tiered heterogeneous networks.【不是很准确，容易造成混淆，但不能说不对，但是可有可无，就删掉了：选择哪个计算节点+如何卸载到该计算节点】

The key challenge in network-wide offloading is the joint optimization of the transmission process and the computation process. The overall delay\footnote{The overall delay refers to the moment from a task is generated to the moment when the destination obtains the computing results of that task.} is determined by three stages, i.e., transmission (source \(\to\) offloading target), computation, and transmission (offloading target \(\to\) destination). While \textit{both} transmission and computation processes affect the overall delay, existing works only address the optimization problem of \textit{either} process and suffer from local minimums.
For example, some offloading approaches, including the ground offloading scheme, offload tasks to specific computing nodes and focus on finding the optimal path from the source to those specific computing nodes. However, these approaches are unable to evaluate and select the most appropriate offloading target among numerous potential offloading targets with computing capability.
Other offloading approaches, including the visible offloading scheme, offload tasks to computing nodes within one-hop range and focus on finding the optimal task distribution among multiple computing nodes. However, the available computing resources in these approaches are very limited.
Despite the extensive studies on \textit{either} problem, a fusion offloading scheme covering \textit{both} of them is still needed.

\subsection{Main Contributions}
%【是什么】
In this paper, we propose a metagraph-based computation and transmission fusion task offloading scheme.
The key idea is to integrate multi-tier networks according to offloading constraints, such that well-studied path-finding algorithms can be repurposed for offloading.
To this end, the virtual edge and metagraph are introduced for graph abstraction.

%【怎么做1】
We use virtual edges to abstract multi-process offloading states.
To unify the representation of computation and transmission processes, we first duplicate the original network and generate virtual edges by connecting the corresponding nodes in the original network and the network duplicate.
Each node in the original network and its duplicate represent a \textit{non-computed state} and an \textit{already-computed state}, respectively.
In other words, each virtual edge can be considered as a computational state migration, and thus can be used to represent the computation process of the corresponding node. 
Therefore, we can assign weights (e.g., computation delays) to virtual edges according to the computation process.
%【结果1】
This resembles classic path-finding algorithms, which assign weights to actual edges according to the transmission processes. The similar designs of actual edge and virtual edge unify the representation of both processes, allowing joint optimization towards the same goal.

%【怎么做2】
We use metagraphs to abstract multi-tier networks.
To achieve joint optimization over multi-tier networks, we construct a metagraph to connect the graphs, where each graph can be a single-tier network. These graphs are connected with actual edges representing the inter-tier transmission process. 
%【结果2】
Therefore, a \textit{multi-process} fused \textit{multi-tier} network is represented with \textit{one} metagraph, where tasks can be offloaded to any computing nodes across the whole network. In this way, the minimum overall delay of transmission and computation can be obtained by searching the optimal path in the metagraph after assigning an appropriate weight to each edge.
%【还做了什么】
%Furthermore, the network-wide computation offloading problem regarding RS tasks is investigated based on the fusion offloading scheme for multi-tier networks stated above. %mulation results show that the proposed offloading scheme has superior performance over the benchmark schemes in terms of the overall delay. \comment{Show the data here.}
%Then, the networks under two typical conventional task offloading schemes are modeled as metagraphs as well since the proposed metagraph-based network construction method has great abstraction capability and compatibility. 
%
%Finally, we apply classic algorithms for the shortest path problem on the metagraph to minimize the delay of the overall offloading process. 
The main contributions are summarized as follows.
%【边如何赋值，以及卸载问题被建模为最短路问题】
\begin{enumerate}
	%具体的说不解决这一问题会带来哪些影响：增加时延，降低网络实际可用资源，和全网计算资源利用率
  	\item We observe that the performance (e.g., delay) of existing LEO-satellite-based computation offloading schemes is crippled by the unbalanced network load in real-world networks.
    To overcome this limitation, we present that the offloading targets need to be extended to the entire network. The network-wide computation offloading problem is transformed into a joint optimization problem over the transmission and computation processes.
    Different from existing studies that focus on the joint allocation of computation and transmission resources, the joint optimization problem discussed in this paper also includes the transmission path optimization.
	\item We propose a virtual-edge-based computation and transmission fusion method to solve the joint optimization problem for single-tier networks.
	Similar to actual edges (representing transmission processes in existing path-finding algorithms), virtual edges represent computation processes.
	Consequently, the transmission and computation process are uniformly represented with edges, which enables the multi-process fusion in single-tier networks.
	\item We propose a metagraph-based method to achieve network-wide offloading inside multi-tier networks. The metagraph connects multiple single-tier networks based on inter-tier visibility.
	With assigning appropriate weights to the edges of the metagraph, multiple single-tier networks are reorganized into a large fusion network.
	Consequently, the joint optimization of both computation and transmission processes is converted to the well-studied path-finding problem over the fusion network.
	\item We show the superiority of our fusion offloading scheme theoretically and empirically.
	From the theoretical perspective, we represent the existing ground offloading scheme and orbital edge computing scheme as metagraphs, and then prove the superiority of our scheme using graph theory.
	From the empirical perspective, we simulate the performance of offloading schemes with typical RS settings.
	Results show that the proposed scheme can reduce the weighted average delay by up to 87.51\% and 18.70\% compared with the ground offloading scheme and the visible offloading scheme, respectively.
\end{enumerate}

\begin{table*}[b]
	\centering
	\vspace{-1.0em}
	\begin{spacing}{1}
		\caption{Comparison of Computation Offloading Studies}\label{RelatedWorkTable}
	\end{spacing}
	\arrayrulecolor{black}
	\small
	\resizebox{1.0\textwidth}{!}{\arrayrulecolor{black}
\begin{tabular}{|c|c|c|c|c|c|c|c|c|c|} 
	\hline
	\rowcolor[rgb]{0.812,0.835,0.918} {\cellcolor[rgb]{0.812,0.835,0.918}}                           & {\cellcolor[rgb]{0.812,0.835,0.918}}                            & {\cellcolor[rgb]{0.812,0.835,0.918}}                           & {\cellcolor[rgb]{0.812,0.835,0.918}}                                                                                                                      & \multicolumn{3}{c|}{Offloading Target}                                                                                                                                                                                                                             & \multicolumn{3}{c|}{Optimization}                                                                                                                                                                                                                                                                                                                                                                                                                                           \\ 
	\hhline{|>{\arrayrulecolor[rgb]{0.812,0.835,0.918}}---->{\arrayrulecolor{black}}------|}
	\rowcolor[rgb]{0.812,0.835,0.918} {\cellcolor[rgb]{0.812,0.835,0.918}}                           & {\cellcolor[rgb]{0.812,0.835,0.918}}                            & {\cellcolor[rgb]{0.812,0.835,0.918}}                           & {\cellcolor[rgb]{0.812,0.835,0.918}}                                                                                                                      & {\cellcolor[rgb]{0.812,0.835,0.918}}                         & \multicolumn{2}{c|}{Space}                                                                                                                                                                          & {\cellcolor[rgb]{0.812,0.835,0.918}}                                                                                                                   & {\cellcolor[rgb]{0.812,0.835,0.918}}                                                                                                                      & {\cellcolor[rgb]{0.812,0.835,0.918}}                                                                                                                   \\ 
	\hhline{|>{\arrayrulecolor[rgb]{0.812,0.835,0.918}}----->{\arrayrulecolor{black}}-->{\arrayrulecolor[rgb]{0.812,0.835,0.918}}--->{\arrayrulecolor{black}}|}
	\rowcolor[rgb]{0.812,0.835,0.918} \multirow{-3}{*}{{\cellcolor[rgb]{0.812,0.835,0.918}}Category} & \multirow{-3}{*}{{\cellcolor[rgb]{0.812,0.835,0.918}}Reference} & \multirow{-3}{*}{{\cellcolor[rgb]{0.812,0.835,0.918}}Scenario} & \multirow{-3}{*}{{\cellcolor[rgb]{0.812,0.835,0.918}}\begin{tabular}[c]{@{}>{\cellcolor[rgb]{0.812,0.835,0.918}}c@{}}Network\\Architecture\end{tabular}} & \multirow{-2}{*}{{\cellcolor[rgb]{0.812,0.835,0.918}}Ground} & \begin{tabular}[c]{@{}>{\cellcolor[rgb]{0.812,0.835,0.918}}c@{}}Visible\\Satellite\end{tabular} & \begin{tabular}[c]{@{}>{\cellcolor[rgb]{0.812,0.835,0.918}}c@{}}Invisible\\Satellite\end{tabular} & \multirow{-2}{*}{{\cellcolor[rgb]{0.812,0.835,0.918}}\begin{tabular}[c]{@{}>{\cellcolor[rgb]{0.812,0.835,0.918}}c@{}}Computing\\Resource\end{tabular}} & \multirow{-2}{*}{{\cellcolor[rgb]{0.812,0.835,0.918}}\begin{tabular}[c]{@{}>{\cellcolor[rgb]{0.812,0.835,0.918}}c@{}}Transmission\\Resource\end{tabular}} & \multirow{-2}{*}{{\cellcolor[rgb]{0.812,0.835,0.918}}\begin{tabular}[c]{@{}>{\cellcolor[rgb]{0.812,0.835,0.918}}c@{}}Transmission\\Path\end{tabular}}  \\ 
	\hline
	1                                                                                                & \cite{9651919,7805169,9013462}                                  & \multirow{4}{*}{Satellite}                                     & Mesh                                                                                                                                                      & \checkmark                                                   &                                                                                                 &                                                                                                   &                                                                                                                                                        & \checkmark                                                                                                                                                & \checkmark                                                                                                                                             \\ 
	\cline{1-2}\cline{4-10}
	2                                                                                                & \cite{9148261}                                                  &                                                                & \multirow{6}{*}{Tree}                                                                                                                                     &                                                              & \checkmark                                                                                      &                                                                                                   & \checkmark                                                                                                                                             &                                                                                                                                                           &                                                                                                                                                        \\ 
	\cline{1-2}\cline{5-10}
	\multirow{2}{*}{3}                                                                               & \cite{9383778,9515574}                                  &                                                                &                                                                                                                                                           &                                                              & \checkmark                                                                                      &                                                                                                   & \checkmark                                                                                                                                             & \checkmark                                                                                                                                                &                                                                                                                                                        \\ 
	\cline{2-2}\cline{5-10}
	& \cite{9344666,9793590}                                                  &                                                                &                                                                                                                                                           & \checkmark                                                   & \checkmark                                                                                      &                                                                                                   & \checkmark                                                                                                                                             & \checkmark                                                                                                                                                &                                                                                                                                                        \\ 
	\cline{1-3}\cline{5-10}
	\multirow{3}{*}{4}                                                                               & \cite{8387798,8663968,8713498}                                  & Cellular                                                       &                                                                                                                                                           & \checkmark                                                   &                                                                                                 &                                                                                                   & \checkmark                                                                                                                                             & \checkmark                                                                                                                                                &                                                                                                                                                        \\ 
	\cline{2-3}\cline{5-10}
	& \cite{9447257,8877759}                                          & UAV                                                            &                                                                                                                                                           & \checkmark                                                   &                                                                                                 &                                                                                                   & \checkmark                                                                                                                                             & \checkmark                                                                                                                                                &                                                                                                                                                        \\ 
	\cline{2-3}\cline{5-10}
	& \cite{8936886}                                          & IoT                                                            &                                                                                                                                                           & \checkmark                                                   &                                                                                                 &                                                                                                   & \checkmark                                                                                                                                             & \checkmark                                                                                                                                                &                                                                                                                                                        \\ 
	\hline
	\rowcolor[rgb]{0.812,0.835,0.918} \multicolumn{2}{|c|}{ Our work}                                                                                                  & Satellite                                                      & Mesh                                                                                                                                                      & \checkmark                                                   & \checkmark                                                                                      & \checkmark                                                                                        & \checkmark                                                                                                                                             & \checkmark                                                                                                                                                & \checkmark                                                                                                                                             \\
	\hline
\end{tabular}}
\end{table*}

\section{Related Work}\label{RelatedWork}
In this section, existing studies on computation offloading are introduced first. Then we summarize the state-of-the-art of LEO satellite capabilities, including computing capabilities, data transmission rates of inter-satellite links (ISLs) and SGLs.
\subsection{Existing Works on Computation Offloading}
Existing works on computation offloading can be categorized according to network architectures and optimization problems. Table~\ref{RelatedWorkTable} summarizes the differences between prior works and our work.

%Part 1: 传统方案卸载到地面计算--虽然优化了传输过程，传输大量原始数据导致时延高--需要在轨计算
\textbf{Category~1} in Table~\ref{RelatedWorkTable} lists typical studies addressing the data transmission problem from space to ground servers for computation via LEO satellite networks. These studies are classified as the \textit{ground offloading} scheme in this paper. 
For example, the work in~\cite{9651919} presented an data delivery architecture upon LEO constellations and ground stations to forward data from remote sensing satellites to end users. 
The authors in~\cite{7805169} proposed a collaborative scheme that allowed satellites to offload data among themselves using ISLs before they came into contact with the earth station (ES), such that satellites would carry the right amount of data according to the length of their contact time with the ES. The work in~\cite{9013462} proposed an iterative algorithm to minimize the energy consumption of data transmission while maximizing the throughput to address data offloading in space. 
In these studies, each LEO satellite is allowed to communicate with its neighboring LEO satellites via ISLs, thus forming a \textit{mesh network}. Although the resource allocation and path finding problems in transmission have been optimized, these studies do not address the high delay problem caused by transmitting large amounts of raw data across the network. To overcome this limitation, onboard computing entered the limelight.
 
%Part 2: 一些研究提出了借助visible LEO卫星的在轨计算--虽然实现了在轨计算（避免了原始数据在网络中的大量传输），但可见计算资源十分有限，在负载不均衡的实际网络中性能不佳--需要全网卸载
%TODO 需要再增加几个这类文献
\textbf{Category~2} in Table~\ref{RelatedWorkTable} lists typical studies that employ onboard computing and investigate the optimal computing task distribution among multiple visible (i.e., communicable) LEO satellites. These studies are classified as the \textit{visible offloading} scheme in this paper. 
%These studies target at the. 
For example, authors in~\cite{9148261} introduced space stations to offload computing tasks of LEO satellites and investigated the computing resource allocation problem between LEO satellites and space stations. 
% computation process optimization
In these studies, LEO satellites are individual from each other. Each LEO satellite only communicates with devices (e.g., ground servers) located within its coverage, which forms a \textit{tree network}. In this network architecture, 
tasks cannot be offloaded to other computing devices in the same network tier. In this condition, the computation offloading strategies in Category~2 may not perform well in actual networks. This is because computing resources in LEO constellations are almost uniformly distributed; whereas, the global task distribution in actual networks is extremely uneven. To further improve the offloading performance, the network-wide computation offloading is urgently needed. 

%Part 3: 过度
To achieve network-wide computation offloading, two problems need to be solved simultaneously: the computing node selection problem and the computing task transmission problem. Therefore, the joint optimization of the computing process and transmission process is needed. Category~3 and Category~4 in Table~\ref{RelatedWorkTable} list typical works on the joint optimization of computation and transmission resource allocation.

%Part 3: 全网卸载的实现，既包含卸载到哪个计算节点，又包含如何卸载到那个计算节点，需要对传输和计算过程进行联合优化--已有的联合优化都是什么--无法实现全网融合
%基于卫星网络的联合优化
\textbf{Category~3} in Table~\ref{RelatedWorkTable} enumerates some existing works on computation and transmission joint optimization for satellite networks. In these works, satellites are individual from each other.
The work in~\cite{9383778} and~\cite{9515574} investigated the joint computation assignment and resource allocation problem in multi-tier computing architectures composed of mobile devices, LEO satellites, etc.
Authors in~\cite{9344666} and \cite{9793590} proposed hybrid computation offloading architectures to solve the joint computation and resource allocation problem, where computing tasks could be offloaded to both ground servers and visible LEO satellites. 

%基于地面网络的联合优化
\textbf{Category~4} in Table~\ref{RelatedWorkTable} enumerates some existing studies on computation and transmission integration for terrestrial networks.
The work in~\cite{8387798} discussed the joint communication and computation resource allocation in a two-tier device--cloud network where tasks could be processed locally, in the edge cloud or both. 
Authors in~\cite{8663968} and~\cite{8713498} investigated the task offloading problem in fog-enabled cellular networks where radio, caching and computing were jointly optimized.
The work in~\cite{9447257} and~\cite{8877759} proposed joint communication and computation resource scheduling approaches for unmanned aerial vehicle (UAV)-assisted local--edge/local--edge--cloud computing systems, where each UAV worked as an edge computing devices to assist devices within its communicable range.
Authors in~\cite{8936886} developed a cloud-fog-device computing architecture for internet of things (IoT), where the offloading ratio, transmission power and local CPU computation speed were jointly optimized. 

%最直观的缺点：不能实现全网卸载
%原因：树型结构，同层不能forward task
%解决方案：允许层内协同
%优势：全网卸载
%挑战：引入路由
Although the studies mentioned above jointly optimize the allocation of multiple resources, they still fail to achieve network-wide computation offloading. %负载不均衡时，性能不好
This is because the networks in the above studies are \textit{tree networks}, where tasks cannot be forwarded to other computing devices in the same network tier.
To overcome this limitation, LEO satellites in this paper are connected with ISLs, which forms a \textit{mesh network}. 
In this condition, computing tasks can be offloaded to any satellite via routing, which makes it possible to extend the offloading targets to the entire network. However, the network-wide computation offloading for LEO satellites brings a novel challenge: the transmission path needs to be optimized as well. Therefore, we proposed a metagraph-based computation and transmission fusion computation offloading scheme to address this challenge. 

%
%In a tree network, the transmission path is determined when the source and destination of a task are given.
%However, in a mesh network, the transmission path cannot be determined by simply giving the source and destination.
%The network-wide optimal offloading problem is still yet to solve.
% TODO 先说 mesh 是必要的，再说我们在 mesh 下面临的挑战，因此解决方案是必要的
%TODO: 检查上面这部分的时态。
%TODO: 表格加一列limitations
\subsection{State-of-the-Art LEO Satellite Capabilities}\label{Background}
The height of LEO satellites usually ranges from 500 km to 2,000 km~\cite{9003618}. Due to the superiority in latency, cost, development cycle, etc., the LEO satellite network is deemed as the most prospective satellite mobile communication system. As the computational requirements and data volume of space missions increase, unprecedented interest and efforts have been devoted to enhancing the computing and transmission capabilities of LEO satellites. The state-of-the-art capabilities of LEO satellites are introduced as follows.
\subsubsection{\textbf{Computing Capability of LEO Satellites}}\label{LEOComputation}
%【有一篇卫星能state-of-the-art的文章可以引用在下面,~\cite{yost2021state}在传输速率的介绍中引用了】
For next-generation science and defense missions, spacecrafts such as LEO satellites must provide advanced processing capability to support a variety of computationally intensive tasks~\cite{geist2019spacecube}. %including rapidly processing high-volume data from sensors (e.g. lidars, hyperspectral imagers), computing solutions for autonomously acting single or constellation spacecraft, and enabling complex algorithms for real-time/near-real-time data product generation and compression~\cite{geist2019spacecube}.
The desire for even more onboard processing capacity has led to the development of onboard computing systems. The computing capabilities of some typical highly integrated onboard computing systems are summarized in Table~\ref{onboard Computing Systems}.

\begin{table}[h]
	\centering
	\vspace{-0.5em}
	\begin{spacing}{1}
		\caption{Computing Capabilities of Highly Integrated Onboard Computing Systems (in GFLOPS)}\label{onboard Computing Systems}
	\end{spacing}
	\small
	\resizebox{0.49\textwidth}{!}{%
\begin{tabular}{cccc} 
	%\toprule
	\rowcolor[rgb]{0.812,0.835,0.918} Product             & Processor                                                                                                                  & \begin{tabular}[c]{@{}>{\cellcolor[rgb]{0.812,0.835,0.918}}c@{}}Computing\\Capability\end{tabular} & Reference                                     \\
	Xiphos
	
	Q7S                                           & Xilinx Zynq 7020                                                                                                           & 180                                                                                                & \cite{XiphosQ7S}                              \\
	\rowcolor[rgb]{0.914,0.922,0.961} Xiphos Q8S          & Xilinx Ultrascale+                                                                                                         & 1800                                                                                               & \cite{XiphosQ8S}                              \\
	BAE RAD5545                                           & RAD5545                                                                                                                    & 3.7                                                                                                & \cite{BAERAD5545}                             \\
	\rowcolor[rgb]{0.914,0.922,0.961} Innoflight CFC-500  & \begin{tabular}[c]{@{}>{\cellcolor[rgb]{0.914,0.922,0.961}}c@{}}Xilinx Kintex Ultrascale+, \\NVIDIA TK1\end{tabular}       & 1290                                                                                               & \cite{CFC500}                                 \\
	MOOG G-Series Steppe Eagle                            & AMD G-Series compatible                                                                                                    & 75                                                                                                 & \cite{MOOG}                                   \\
	\rowcolor[rgb]{0.914,0.922,0.961} MOOG V-Series Ryzen & AMD V-Series compatible                                                                                                    & 1000                                                                                               & \cite{MOOG}                                   \\
	Unibap iX5-100                                        & \begin{tabular}[c]{@{}c@{}}Microchip SmartFusion2,\\AMD G-Series SOC\end{tabular}                                          & 127                                                                                                & \cite{iX5-100,bruhn2020enabling}              \\
	\rowcolor[rgb]{0.914,0.922,0.961} Unibap iX10-100     & \begin{tabular}[c]{@{}>{\cellcolor[rgb]{0.914,0.922,0.961}}c@{}}Microchip PolarFire, \\AMD V1605b (Ryzen)\end{tabular}     & 3600                                                                                               & \cite{iX10-100,bruhn2020enabling}             \\
	SpaceCube v2.0                                        & Xilinx Virtex 5                                                                                                            & 200                                                                                                & \cite{SpaceCube2.0}                           \\
	\rowcolor[rgb]{0.914,0.922,0.961} SpaceCube v3.0      & \begin{tabular}[c]{@{}>{\cellcolor[rgb]{0.914,0.922,0.961}}c@{}}Xilinx Kintex UltraScale,\\ Xilinx Zynq MPSoC\end{tabular} & 590                                                                                                & \cite{geist2019spacecube,SpaceCube3.0Patent}  \\
	%\bottomrule
\end{tabular}}
%	\vspace{-1.0em}
\end{table}

It can be concluded form Table~\ref{onboard Computing Systems} that existing onboard computing systems can provide thousands Giga floating-point operations per second (GFLOPS) of computing capability. For example, national aeronautics and space administration (NASA) Goddard Space Flight Center (GSFC) developed SpaceCube v3.0 in 2019~\cite{9546282}. It contains a Xilinx Kintex UltraScale with a Xilinx Zynq MPSoC to provide 10--100x or more performance over other flight single-board computers~\cite{geist2019spacecube}. In specific, the computing capacities of these systems on chips (SoCs) both exceed 100 GFLOPS. 
Although the computing capability of LEO satellites is not yet comparable to that of geosynchronous equatorial orbit (GEO) satellites and ground servers, the prospect and importance of increasing the computing capability of LEO satellites has been recognized and a great deal of research has been invested, indicating a promising future for onboard computing.
%
%In addition, ZEA-2, Forschungszentrum Jülich extends CubeSats with a system-on-module design~\cite{10.1117/1.JRS.13.032507} in 2019. It features a Xilinx XA7Z020 SoC, delivering 276 GFLOPS’ computing capability.
%Computing capabilities of space systems have increased onboard performance by orders of magnitude with the use of radiation-tolerant field-programmable gate arrays (FPGA)
%and processors. 
%Traditional rad-hard processors are typically several generations behind commercial devices in terms of processing capability~\cite{geist2019spacecube}. 

\subsubsection{\textbf{Transmission Capability of LEO Satellites}}\label{LEODataRate}
ISLs in free space are usually higher in data rate. For instance, as shown by Del Portillo~\cite{del2019technical}, the data rate of optical ISLs can achieve 5 Gbps. Mynaric's laser terminal for LEO constellations is capable of delivering 10 Gbps with a low SWaP unit over a wide range of constellation configurations~\cite{10.1117/12.2545629}. It can operate within densely packed constellations with intra/inter-plane link distances up to 7,800 km.

In contrast, the data rate of SGL cannot keep up with that of ISL due to the perturbation induced by the atmosphere~\cite{alonso2004performance}. The down-link data rate for state-of-the-art satellites ranges from 20 Mbps to 1 Gbps~\cite{yost2021state}. For example, the CubeSat lasercom module by Hyperion Technologies enables a bidirectional space-to-ground communication link between a CubeSat and an optical ground station, with a down-link speed up to 1 Gbps and an up-link data rate of 200 Kbps~\cite{yost2021state}. The limited SGL transmission capability further promotes the application of on-board computing.

\section{System Model}\label{SystemModel}
In this section, the LEO satellite network and its parameter settings adopted in this paper are introduced first. Next, the definition and distribution of tasks are explained. The delay model is introduced in the end.
\subsection{Network Model}\label{Network Model}
%\textbf{1. Utilizing Commercial LEO Constellation}\\
%\textbf{2. Network Characteristics:} Dynamics--Intermittent Communication (ISL + STG)\\
%\textbf{3. Network Modeling Method:} VT or VN
Although the proposed fusion offloading scheme is compatible to a wide range of LEO constellations, for demonstration use, we take a Walker Star constellation~\cite{walker1984satellite} with polar orbits (i.e., the orbit inclination angle $ i_0 $ is $ 90^{\circ} $) as an example.
%【参数定义】
As shown in Fig~\ref{WalkerConstellation} (i), the LEO constellation adopted in this paper consists $ N_o $ circular orbits whose altitude is $ h $ kilometers. These orbit planes are evenly distributed over $ 180^{\circ} $ range, traveling north on one side of the Earth, and south on the other side. They cross each other only over the North and South poles. $ N_s $ LEO satellites, whose maximum computing capability is $ C $, uniformly distributed over each orbit. 
The orbital period is $ T_O = 2\times\pi\times\sqrt{{(R_e+h)}^3/(G\times M_e)} $, where $ R_e $ and $ M_e $ represent the radius and mass of the earth respectively, and $ G $ is the gravitational constant. In addition, LEO satellites connect with each other via ISLs and connect with ground stations via SGLs. The maximum data transmission rate of each ISL and SGL are $ R_{ISL} $ and $ R_{SGL} $, respectively.

\begin{figure*}[t]
	\centering
	\vspace{-0cm}  %调整图片与上文的垂直距离 
	\setlength{\abovecaptionskip}{-0.2cm}%调整图片标题与图距离
	\setlength{\belowcaptionskip}{-1.0cm}%调整图片标题与下文距离 
	\includegraphics[width=0.95\textwidth]{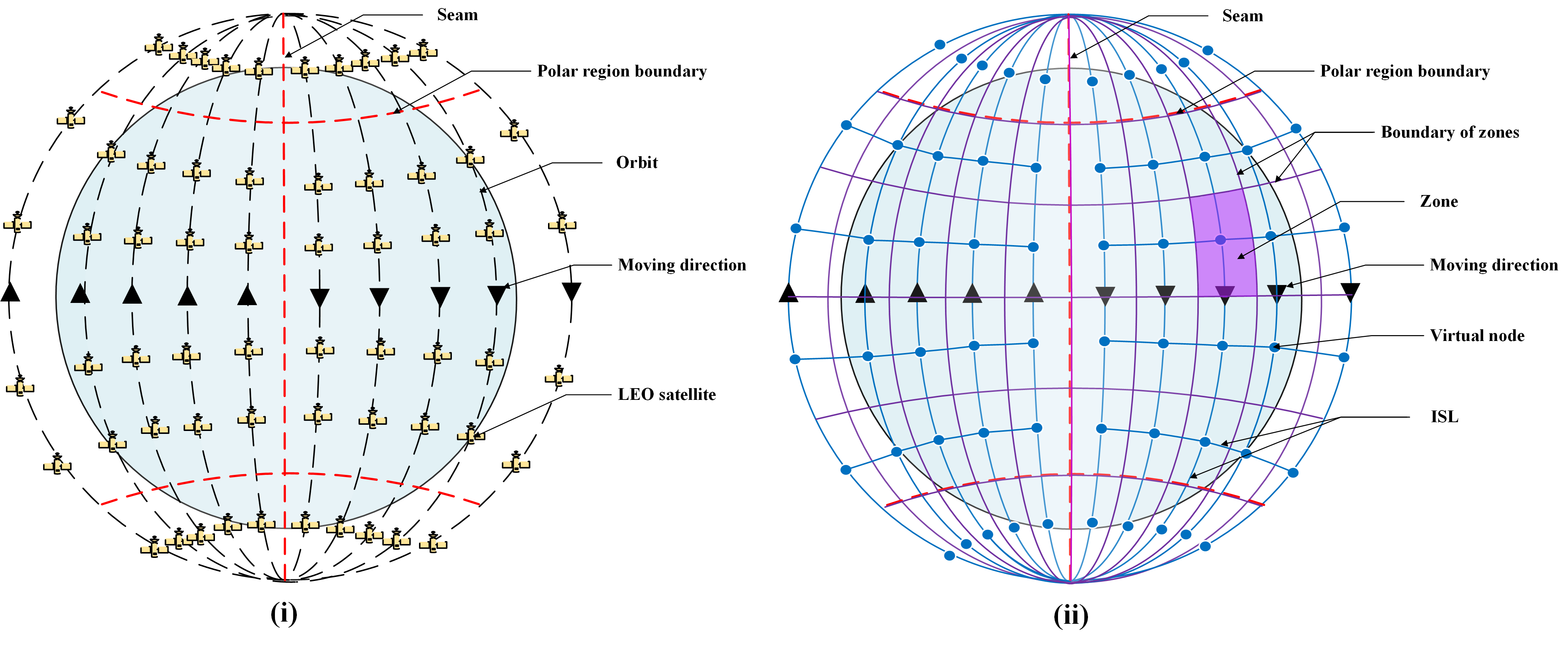}\\
	\begin{center}
		\caption{(i) The Walker constellation. (ii) The virtual-node-based network model of the Walker constellation.}\label{WalkerConstellation}
	\end{center}
	\vspace{-2.0em}
\end{figure*}

%【VN-based网络模型&连接方法说明】
As shown in Fig~\ref{WalkerConstellation} (ii), this paper uses virtual nodes (VNs) to model the dynamic LEO satellite network. It divides the LEO satellite network into logical locations (i.e., static zones) which are represented by virtual nodes and associated with the nearest satellites~\cite{2001A}. The association between the static zones and satellites changes due to satellite movements. With this architecture, each change in the satellite association represents a new snapshot~\cite{8332924}. Each snapshot could be considered as a mesh network presenting a static state of the network topology. Each VN has four ISLs with its neighbors where two are intra-plane and two are inter-plane. The ISLs in cross-seam and Polar Regions are switched off due to high dynamic motions.  

%High-speed relative motions exist between adjacent orbiting satellites. Since the communicable distances of satellites are limited, the ISLs of adjacent orbiting satellites are intermittently connected. 
%In addition, the time-varying relationship between the geographic location of ground stations and the orbital position of LEO satellites imposes intermittently available satellite-to-ground transmission links. It incurs high latency between data collection and data processing in existing systems that simply downlink raw observations~\cite{10.1145/3373376.3378473}.

\subsection{Traffic Model}\label{TrafficModel}
% For instance, a large amount of network traffic is located on land, especially in hotspots (such as cities and ports).
%As the distribution of offloaded tasks are unbalanced, available resources including computing and spectrum resources are also unevenly distributed on the LEO satellite network~\cite{6029964,8398222}.
\subsubsection{\textbf{Task Definition}}
For any task $ \mathcal{T} $, it can be divided into multiple independent subtasks, i.e., $ \mathcal{T}= \{\tau_1,\tau_2,\dots,\tau_l,\dots,\tau_{n}\} \ (l,n\in Z^+,\ l\leq n) $, where $ \tau_l $ is the $ l^{\rm th}$ subtask of $ \mathcal{T}$ and $ n $ is the total number of subtasks in $ \mathcal{T}$. In this paper, subtasks are the smallest unit of transmission and computation, which cannot be further separated. 

Generally, there are five items used to depict task $ \mathcal{T} $, i.e., $ \Lambda_ {\mathcal{T}} = (u,v,n,t,\vartheta) $, where $ u $ is the source node where task $ \mathcal{T} $ is generated, $ v $ is the destination node, $ n $ is the number of subtasks contained in $ \mathcal{T} $, $ t $ is the instance when $ \mathcal{T} $ is generated in seconds and $ \vartheta $ is the delay threshold (i.e., the maximum tolerable delay) of $ \mathcal{T} $ in seconds.

Similarly, any subtask $ \tau_l \in \mathcal{T} $ can be described as $ \Lambda_ {\tau_l} = (\widetilde{C_l},\widetilde{N_l}) $, where $ \widetilde{C_l} $ and $ \widetilde{N_l} $ are the computational requirement of subtask $ \tau_l $ in GFLOPS, the data volume of subtask $ \tau_l $ in gigabytes (GB), respectively.

\subsubsection{\textbf{Task Distribution}}
In this paper, the task arrival of zone $ i $ is assumed to be a Poisson stochastic process with an arrival rate $ \lambda_i $ because such processes have attractive theoretical properties~\cite{267444}. 
Since the transmission of tasks depends on network connections, in this paper we utilize the network connection data from Internet Census 2012 introduced in Section~\ref{Introduction} to estimate the arriving rate of each zone.
First, we divide the Earth into $ 22.5◦^{\circ} \times 22.5^{\circ} $ geographical zones whose set is $ \mathbf{Z} $. 
Next, the open source data of the Internet Census 2012 are utilized to estimate the connection index $ \eta $ of each zone. The results are shown in Fig~\ref{ConnectionIndex}. It is obvious that the distribution of network connections is extremely uneven. 
Finally, the arriving rate $ \lambda_i $ of zone $ i $ is proportional to the corresponding connection index $ \eta_i $ (i.e., $ \lambda_i \propto \eta_i $).

\begin{figure}[h]
	\centering
	\setlength{\abovecaptionskip}{-0.cm}
	\setlength{\belowcaptionskip}{-0.cm}
	\includegraphics[width=0.5\textwidth]{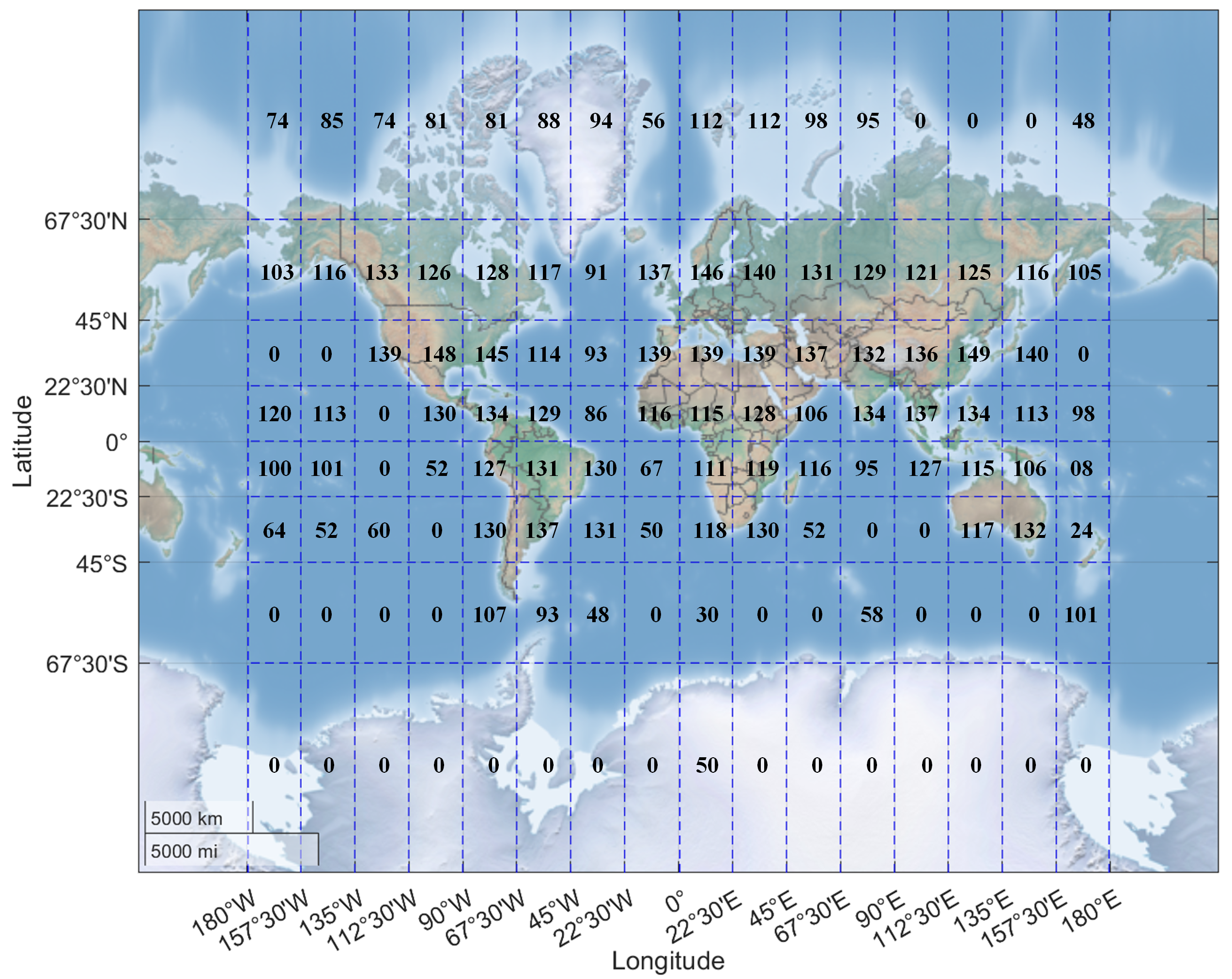}\\
	\begin{center}
		\caption{The connection index $ \eta $ of each zone.}\label{ConnectionIndex}
	\end{center}
	\vspace{-1.0em}
\end{figure}

\begin{definition}[Load]
	The load of a zone is denoted as the task arriving rate (i.e., the number of tasks arriving every second) of that zone.  
\end{definition}

The load of the whole network $ \mathcal{L} $ is the sum of the load of each zone (i.e., $\mathcal{L}=\sum\lambda_i$). When the network load $ \mathcal{L} $ is given, the load of a zone is proportion to the connection index of that zone (i.e., $ \lambda_i = \mathcal{L} \times \eta_i/\sum_{z\in{\tiny } \mathbf{Z}}{\eta_z} $).

%上面网络负载的参数如何反映在下面的task里
%在泊松过程中，我们把抛出硬币正面这样的事件叫做到达（Arrival）。把单位时间内到达的数量，叫做到达率（Arrival  Rate）
\begin{figure*}[t]
	\centering
	\setlength{\abovecaptionskip}{-0.cm}
	\setlength{\belowcaptionskip}{-0.cm}
	\includegraphics[width=0.95\textwidth]{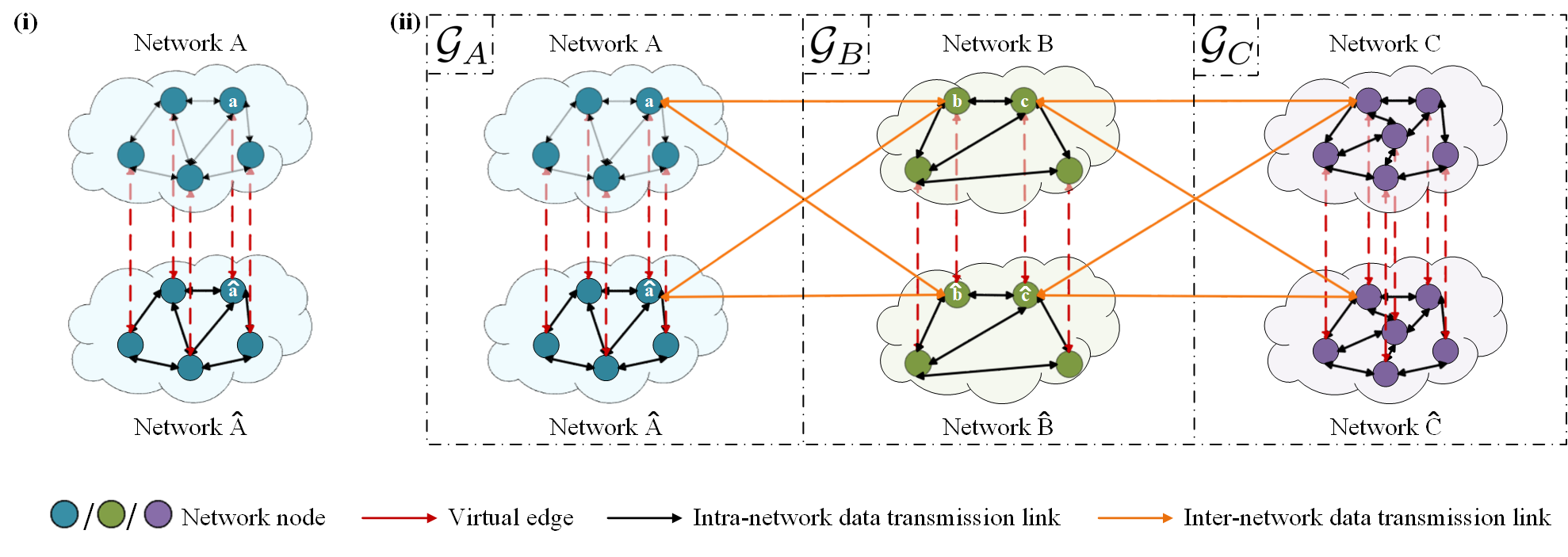}\\
	\begin{center}
		\caption{(i) Virtual-edge-based unify representation of transmission process and computation process. (ii) Metagraph-based computation and transmission fusion for multi-tier networks.  Network $ A $, $ B $ and $ C $ are three single-tier networks which can be most networks, such as satellite, UAV, IoT, and cellular networks. $ \mathcal{G}_A $, $ \mathcal{G}_B $ and $ \mathcal{G}_C $ are computation and transmission fused single-tier networks corresponding to network $ A $, $ B $ and $ C $, respectively.}\label{Multi-Tier_Multi-Process}
	\end{center}
	\vspace{-1.0em}	 
\end{figure*} 

\subsection{Delay Model}
In this paper, the \textit{overall} delay consists of the \textit{transmission} delay, the \textit{propagation} delay, the \textit{computation} delay and the \textit{waiting} delay.
  
In dynamic networks, the \textit{transmission} delay $T_{trans}$ of subtask $ \tau_l $ on edge $e(i,j)$ starting from instant $ t $ satisfies the following equation: $\widetilde{N_l}=\int_{t}^{t+T_{trans}} R_{i,j}^r(t) dt$, where $\widetilde{N_l} $ is the data volume of subtask $\tau_l$ and $R_{i,j}^r(t)$ is the available transmission rate of edge $ e=(i,j) $ at instant $ t $.

Similarly, the \textit{computation} delay $T_{comp}$ of $ \tau_l $ at node $i$ starting from instant $t$ satisfies the following equation: $\widetilde{C_l}=\int_{t}^{t+T_{comp}}C_i^r(t) dt$, where $\widetilde{C_l}$ is the computational requirement of subtask $\tau_l$. $C_i^r(t)$ is the amount of available computing capability that node $ i $ can provide at instant $ t $.

Ignoring the minor distance changes during transmissions, the \textit{propagation} delay $T_{prop}$ on edge $e(i,j)$ starting from instant $ t $ is mathematically defined as $T_{prop}(t)=D_{i,j}(t)/c$, where $ D_{i,j}(t) $ represents the distance between node $ i $ and node $ j $ at instant $ t $. $ c $ is the speed of light. 

The \textit{waiting} delay is the duration from the arrival of a subtask to the moment when the subtask starts being transmitted or processed. In the following sections, the waiting delay before transmission and computation are included in the corresponding transmission delay and computation delay, respectively.

\section{Multi-Tier Computation and Transmission Fusion Network Construction}\label{Metagraph-BasedFusionScheme}
In this section, a virtual-edge-based joint representation method of transmission and the computation processes is first proposed. Then a metagraph-based fusion network integration method for multi-tier networks is presented. 

\begin{definition}[Metagraph]
	A graph is denoted as $ G=\{\mathbf{V},\mathbf{E}\} $, where $ \mathbf{V} $ is the node set and $ \mathbf{E} $ is the edge set. Assuming that $ \{G_1,G_2,...,G_m\} (m\in Z^+) $ is a set of graphs. 
	A metagraph $ \mathbb{G} $ is a graphical representation of a set of graphs and the connections between them. It can be mathematically defined as $ \mathbb{G} = \{\mathbb{V},\mathbb{E}\} $, where $ \mathbb{V}=\{\mathbf{V}_1\cup\mathbf{V}_2\cup\dots\cup\mathbf{V}_m\} $, $ \mathbb{E}=\{\mathbf{E}_1\cup\mathbf{E}_2\cup\dots\cup\mathbf{E}_m\cup\mathbf{E}_{1,2}\cup\mathbf{E}_{2,3},\cup\dots\cup\mathbf{E}_{m-1,m}\} $. $ \mathbf{E}_{m-1,m} $ presents the set of edges connecting $ G_{m-1} $ and $ G_m $. For any $ e(i,j)\in \mathbf{E}_{m-1,m} $, $ i\in \mathbf{V}_m, j\in \mathbf{V}_{m-1}$ or $ i\in \mathbf{V}_{m-1}, j\in \mathbf{V}_m$.
\end{definition}

%\comment{Maybe add more description about metegraph here. A definition is similar to a figure. You need to attach more context to it.}【这个把下面q取消注释取消注释是不是就address你的意见了】
A metagraph can represent not only the internal connectivity of nodes within a graph, but also the connectivity between graphs. Based on this property, metagraphs are adopted to integrate multiple tiers of networks and fuse transmission and computation.

\subsection{Virtual-Edge-Based Multi-Process Representation}\label{Virtual-Edge-Constrution}
\subsubsection{\textbf{Motivation}}
Networks are often modeled as graphs, where network devices are modeled as nodes and transmission links are modeled as edges. In this way, problems in networks, such as routing and resource allocation, can be solved with graph-based algorithms.
To make these algorithms still applicable after fusing transmission and computation, the computation process is also expected to be represented by edges.
However, the computation process cannot be directly expressed as edges. It is because edges represent the connectivity between nodes, yet the computation process takes place on nodes rather than inter-node links.
%\comment{[Write from the perspective of node] Representing transmission with edges does not prevent an edge being used to represent computation. The argument here could be polished.}
Therefore, a new approach to abstract the computation process is needed.

\subsubsection{\textbf{Key Idea}}
This paper introduces \emph{virtual edges} to represent the computation process.
Virtual edges are built by generating a duplicate of the actual network and connecting each node in the actual network with the corresponding node in the duplicate.
To represent computational state migration with virtual edges, we assign the \textit{uncomputed state} and the \textit{computed state} to the two vertices of each virtual edge, respectively.
In this way, the transmission process (actual edges) and computation process (virtual edges) are uniformly represented, which enables the multi-process fusion in single-tier networks.

Despite the edge-based computation process representation, virtual edges will not cause conflicts in the physical meaning of the actual computation process.
Specifically, because both vertices of a virtual edge represent the same computing node, the computation process is still related only to the node performing that computation.

% In this paper, virtual edges, which are built by generating a duplicate of the actual network and connecting each node in the actual network with the corresponding node in the duplicate, are proposed to represent the computation process.
% By assigning the \textit{uncomputed state} and the \textit{computed state} to the two vertices of each virtual edge, respectively, each virtual edge can be considered as a computational state migration, and thus can be used to represent the computation process of the corresponding node.
% In this method, although the computation process is represented by edges, since both vertices of a virtual edge represent the same computing node, the computation process is still related only to the node performing that computation, in line with the physical meaning of the actual computation process.
% The virtual edge-based multi-process representation method is elaborated as follows.

%【融合就是联合优化，融合是从网络的角度说的，联合优化是从提升网络性能的角度说的】通过网络融合实现联合优
\subsubsection{\textbf{Implementation}}
Fig.~\ref{Multi-Tier_Multi-Process} (i) shows the virtual-edge-based joint representation of transmission process and computation process.
First, for any single-tier network $ A $, model this network as a graph which can be denoted as $ G_A=\big(\mathbf{V}_A,\mathbf{E}_A\big) $, where $ \mathbf{V}_A $ is the set of nodes and $ \mathbf{E}_A $ is the set of edges. Each edge $ e(i,j)\in \mathbf{E}_A$ represents the transmission link between node $ i $ and node $ j $.
Next, generate a duplicate of the original network $ G_A $. The network duplicate is denoted as $ G_{\widehat{A}}=\big(\mathbf{V}_{\widehat{A}},\mathbf{E}_{\widehat{A}}\big) $, where $ \mathbf{V}_{\widehat{A}}=\mathbf{V}_A $ and $ \mathbf{E}_{\widehat{A}}=\mathbf{E}_A $. \comment{I think the correct symbol of copied graph is \(G_{\widehat{A}}\), instead of \(G_{\widehat{A}}\). This also applies to E and V.} 
Then, connect each node in the original network $ G_A $ with the corresponding node in the copied network $ G_{\widehat{A}} $ to generate virtual edges whose set is $ \mathbf{E}_{virtual} $.
For any node $ i \in \mathbf{V}_A $,  
$ e(i,\widehat{i}) \in \mathbf{E}_{virtual} $ represents the virtual edge between node $ i \in \mathbf{V}_A $ and its duplicate $ \widehat{i} \in \mathbf{V}_{\widehat{A}} $. Each virtual edge $ e(i,\widehat{i}) \in \mathbf{E}_{virtual} $ is utilized to represent the computation process associated with node $ i $. Finally, a novel graph $ \mathcal{G}_A=\big(\mathcal{V}_A,\mathcal{E}_A\big)=\big(\mathbf{V}_A \cup \mathbf{V}_{\widehat{A}}, \mathbf{E}_A \cup \mathbf{E}_{\widehat{A}} \cup \mathbf{E}_{virtual}\big) $ is generated. In graph $ \mathcal{G}_A $, edges in $ G_A $ and $ G_{\widehat{A}} $ (i.e., $ \mathbf{E}_A \cup \mathbf{E}_{\widehat{A}} $) are utilized for representing the transmission process; whereas, virtual edges between $ G_A $ and $ G_{\widehat{A}} $ (i.e., $ \mathbf{E}_{virtual} $) are utilized for representing the computation process. 
%By this way, the transmission process and the computation process can be resented jointly and uniformly in the same graph. \comment{This paragraph could be made more easy to follow by adjusting the structure. Rather than describing ``how'' to construct the graph in the front, start the paragraph a few sentences describing ``what'' the big picture is. The ``what'' part does not need to be very formal, presenting the overall idea should be enough. [通过为网络产生虚拟副本并连接可以融合。下面是说明。]}

%能够利用传统最短路算法
\subsection{Metagraph-Based Multi-Tier Integration}\label{Metagraph-Based-NetworkIntergration}
\subsubsection{\textbf{Motivation}}
When integrating multiple single-tier networks into a multi-tier network, the inter-tier visibility should be considered.
It is because wireless network devices usually have limited communication distances. If two tiers of networks are distant from each other, such as satellite networks and terrestrial networks, the nodes of one network tier may not be fully visible (i.e., communicable) to the nodes of another network tier.

\subsubsection{\textbf{Key Idea}}
This paper introduces \emph{metagraphs} to integrate multiple single-tier fusion networks constructed in Section~\ref{Virtual-Edge-Constrution}.
Naturally, each single-tier network can be modeled as a graph; therefore, a fusion network can be modeled as \emph{graph of graphs}, where edges between graphs are used to represent inter-tier visibility.

The mathematical tool to describe graph of graphs is metagraph.
When appropriate physical constraints are assigned to the metagraph, mathematical characteristics of metagraph can be used to optimize computation offloading.
For example, when minimizing the delay, we can assign the corresponding sum of transmission delay and propagation delay to each actual edge, and then assign the corresponding computation delay to each virtual edge.
In this way, a path inside the metagraph can be projected to a concrete offloading decision.
The length of the path corresponds to the total delay of the decision.
Consequently, the joint optimization of transmission and computation for multi-tier network can be solved by searching the shortest path in the metagraph.

% Connectivity between graphs can be represented by a metagraph. Based on this property, metagraphs are employed to integrate multiple single-tier fusion networks constructed in Section~\ref{Virtual-Edge-Constrution}, thus enabling transmission and computation fusion for multi-tier networks.

%This visibility may also change when there are dynamic changes in the network. 【可以放在脚注里说】]
%Therefore, when integrating multiple single-tier networks into a multi-tier network, the inter-tier visibility must be considered. 
% In the following, the construction method of a metagraph-based multi-tier computation and transmission fusion network is introduced.

\subsubsection{\textbf{Implementation}}
%\comment{下面这段感觉写的不太好！！！！！！！！！！！！找我讨论。[For any \(a \in ..., b \in ..., ...\)]}
Fig.~\ref{Multi-Tier_Multi-Process} (ii) presents a metagraph-based multi-tier computation and transmission fusion network. 
Inside Fig.~\ref{Multi-Tier_Multi-Process} (ii), Network $ A $, $ B $ and $ C $ are three single-tier networks which can be any \comment{Extreme words such as ``any'' ``all'' should be avoided.} network, such as satellite, UAV, IoT, and cellular networks. $ \mathcal{G}_A $, $ \mathcal{G}_B $ and $ \mathcal{G}_C $ are computation and transmission fused single-tier networks introduced in Section~\ref{Virtual-Edge-Constrution} corresponding to network $ A $, $ B $ and $ C $, respectively. 
%\comment{Perhaps the description above should be moved to the figure's caption.} 
Take the connections between $ \mathcal{G}_A $ and $ \mathcal{G}_B $ as an example. Assuming that node $ a (a\in \mathbf{V}_A)$ is visible to node $ b (b\in \mathbf{V}_B)$. That is, $ a $ and $ b  $ form a visible node pair.
There are 4 edges that can be added, i.e., $ e(a,b) $, $ e(a,\widehat{b}) $, $ e(\widehat{a},b) $ and $ e(\widehat{a},\widehat{b}) $.
% 四条边是所有可能的情况，需要根据具体的网络设置去添加
Each visible node pair between $ \mathcal{G}_A $ and $ \mathcal{G}_B $ needs to conduct the above connections. In this way, a metagraph, which contains multiple single-tier networks and fuses computation and transmission, is generated.

Specific offloading strategies can be decoded from the path-finding results. For example, assuming that the shortest path from node $ a $ to $ c $ is $ a \rightarrow \widehat{a} \rightarrow \widehat{b} \rightarrow b \rightarrow c $. $ a \rightarrow \widehat{a} $ and $ \widehat{b} \rightarrow b $ indicates that the task should be processed on both node $ a $ and $ b $. $ \widehat{a} \rightarrow \widehat{b} $ indicates inter-tier data transmission; whereas $ b \rightarrow c $ indicates intra-tier data transmission.
% In this way, not only the task can be offloaded across tiers, but also the minimum overall delay of transmission process and computation process can be obtained.
%可以实现将计算任务卸载至多层网络中的任何节点
%要说明 meta graph中 graph时怎么连的， 要说明双向的问题，要说明使用与任意网络，做种实现了多层网络的传输雪计算的融合  何以用于融和任意层任意网络

\section{Network-Wide Task Offloading With Multi-Tier Fusion Network}\label{AlgorithmSteps}
In this section, we take the network-wide offloading problem for space missions presented in Section~\ref{Introduction} as an example to illustrate how to construct a multi-tier fusion network based on real-world scenarios and achieve the network-wide optimal computation offloading. Then, we prove the superior performance of the proposed computation offloading scheme over the conventional schemes. The proof is based on a graph theory performed on metagraphs, where the conventional task offloading schemes are also converted to metagraphs.

%As stated in Section~\ref{Introduction}, conventional task offloading schemes for LEO satellite networks focus either on the data transmission process or on the computation process, which are unable to utilize the computing resource across the network.
%To extend the offload target to computing devices across the network, one viable approach is fu 
%In this section, a metagraph-based computation and transmission fusion task offloading scheme is proposed. 
\subsection{Network-Wide Computation Offloading for Space Missions}\label{FusionScheme}
The network-wide computation offloading problem for space missions proposed in Section~\ref{Introduction} involves three single-tier networks, i.e., the RS satellite network, the LEO satellite network and the terrestrial network. To achieve network-wide computation offloading, first, the above multi-tier network is constructed as a metagraph based on the multi-tier fusion network construction method proposed in Section~\ref{Metagraph-BasedFusionScheme}. Then, edges of the metagraph are assigned with appropriate weights. At last, the optimal offloading path in the metagraph is searched using the shortest path algorithm. Details are described below.%【最后一句话生硬不？】
%For simplicity, 
 %For networks with multiple tiers, achieving the fusion of transmission and computation is the key to enable network-wide task offloading.
%For simplicity, 
%Task offloading may involve multiple tiers of networks such as air, space and Earth, etc. 
%For networks with multiple tiers, achieving the fusion of transmission and computation is the key to enable network-wide task offloading. %融合的最直接目的是为了联合优化，可能不是全网卸载
%In this paper, a metagraph-based fusion network construction method is proposed. 
%\begin{definition}[Metagraph]
%	A metagraph is a graphical representation of a set of graphs and the connections between them.
%\end{definition}
% A metagraph can represent not only the internal connectivity of nodes within a graph, but also the connectivity between graphs. Based on this property, metagraphs are adopted to integrate multiple tiers of networks and fuse transmission and computation.

\begin{figure}[b]
	\centering
	\setlength{\abovecaptionskip}{-0.cm}
	\setlength{\belowcaptionskip}{-0.cm}
	\includegraphics[width=0.45\textwidth]{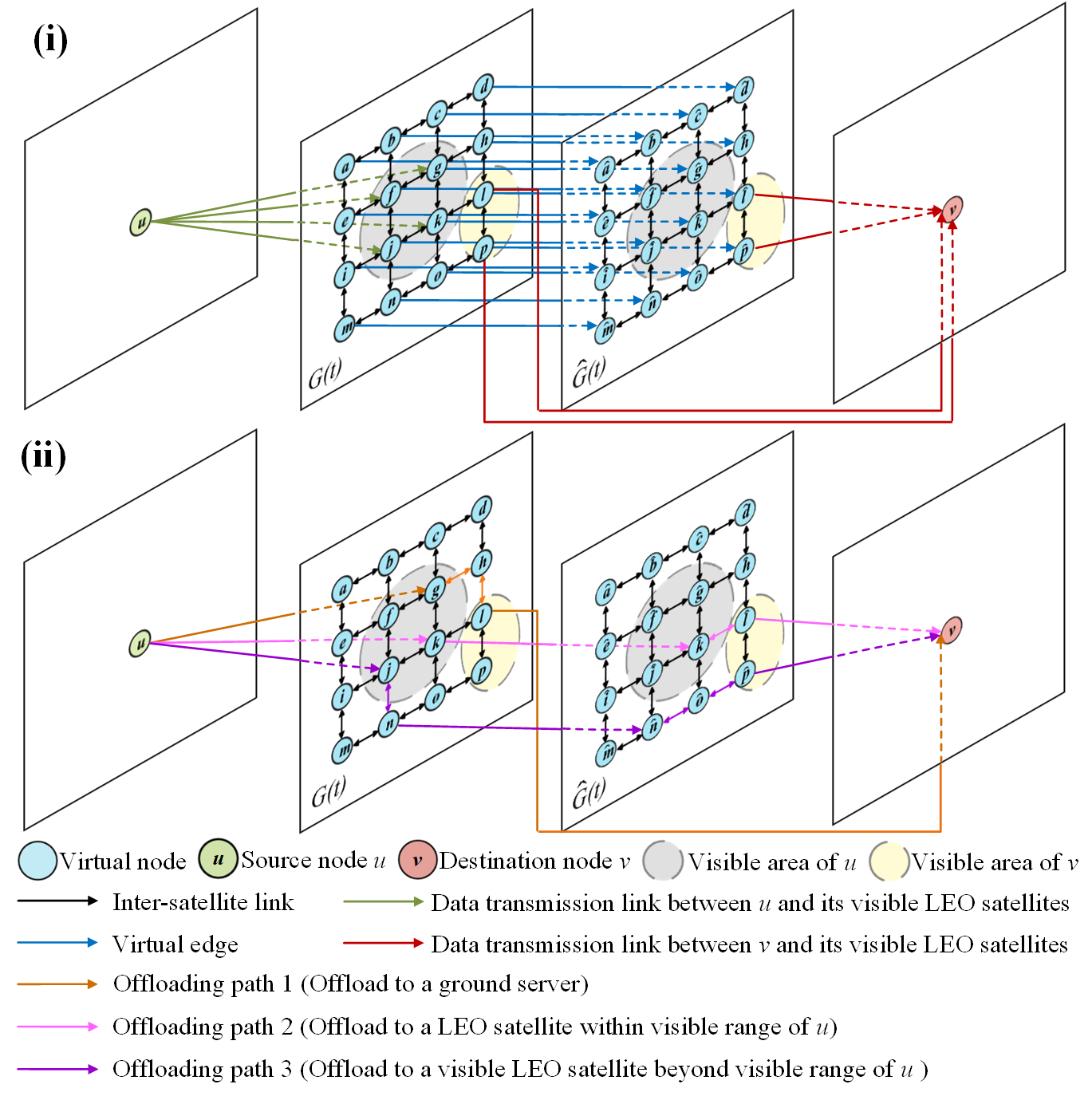}\\
	\begin{center}
		\caption{(i) The metagraph of fusion offloading scheme for space missions. (ii) Typical offloading paths.}\label{FusionNetwork}
	\end{center}	 
\end{figure}

Fig.~\ref{FusionNetwork} shows the metagraph-based multi-tier fusion network and typical offloading paths of the network-wide offloading scheme for space missions. For clarity, only one source node and one destination node are drawn. The properties of this three-tier network are summarized as follows. 
\begin{itemize}
	\item \textbf{Network tiers:} the RS satellite network, the LEO satellite network and the terrestrial network.
	\item \textbf{Source of tasks:} RS satellites.
	\item \textbf{Destination of tasks:} ground servers.
	\item \textbf{Offloading targets:} LEO satellites or Ground servers.
	\item \textbf{Assumptions:} the computing capabilities of ground servers are strong enough that the computation delay of tasks can be ignored when they are offloaded to ground servers for computation.%一方面解释了Fig5地面网络没有赋值并生成虚拟边，一方面解释仿真中sever的计算时间怎么处理的
\end{itemize}

As shown in Fig.~\ref{FusionNetwork} (i), tasks are offloaded from the source node $ u $ to the destination node $ v $ via the LEO satellite network.
% Both $ u $ and $ v $ can be located at any tier of the multi-tier network. For simplicity, we take the example of $ u $ being in space and $ v $ being on the ground.
To enable network-wide offloading, the computation and transmission fusion network according to this scenario is constructed. The construction steps are stated as follows. %First, generate a network copy $ \widehat{G}(t) $ of the virtual-node-based network $ G(t) $ and connect virtual edges as Section \ref{Virtual-Edge-Constrution} stated. Then connect $ u $ and VNs whose associated satellites are visible to $ u $ in $ G(t) $ such that raw data of tasks can be offloaded from the source node $ u $ to the LEO satellite network. Finally, connect $ v $ and VNs whose associated satellites are visible to $ v $ in both $ G(t) $ and $ \widehat{G}(t) $ such that raw data or computing results of tasks can be offloaded from the LEO satellite network to the destination node $ v $. 

\textbf{STEP 1.1:}  Model the dynamic LEO satellite network as a virtual-node-based network (introduced in Section~\ref{Network Model}) $ G(t)=\big(\mathbf{V},\mathbf{E}\big) $\footnote{The LEO satellite network is defined as $ G(t) $ because it is a dynamic network. After being modeled as a virtual-node-based network, the dynamics is mainly reflected in the change of the association between LEO satellites and virtual nodes, so the node set and the edge set are denoted as time-independent variables.}, where $ \mathbf{V} $ is the set of virtual nodes (hereinafter referred to as nodes) and $ \mathbf{E} $ is the set of edges. Each edge $ e(i,j)\in \mathbf{E} $ represents the transmission process on the ISL between the LEO satellites associated with node $ i $ and node $ j $. 

\textbf{STEP 1.2:} Generate a network duplicate $ \widehat{G}(t)=\{\widehat{\mathbf{V}},\widehat{\mathbf{E}}\} $ of the original virtual-node-based network $ G(t) $.

\textbf{STEP 1.3:} Generate virtual edges as Section~\ref{Virtual-Edge-Constrution} stated. For each $ i\in \textbf{V} $, connect node $ i $ and the corresponding node $ \widehat{i} $ in $ \widehat{G}(t) $ with a virtual edge. The set of virtual edges is $ \textbf{E}_{virtual} $.
 
\textbf{STEP 1.4:} Connect $ u $ and nodes whose associated satellites are visible to $ u $ in $ G(t) $ such that raw data of tasks can be transmitted from the source node $ u $ to the LEO satellite network. The set of edges connecting $ u $ and its visible nodes is defined as $ \textbf{E}_{u,G(t)} $.

\textbf{STEP 1.5:} Connect $ v $ and nodes whose associated satellites are visible to $ v $ in both $ G(t) $ and $ \widehat{G}(t) $ such that raw data or computing results of tasks can be offloaded from the LEO satellite network to the destination node $ v $. The set of edges connecting $ v $ and its visible nodes is defined as $ \textbf{E}_{v,G(t),\widehat{G}(t)} $. 

\textbf{STEP 1.6:} For every subtask $ \tau_l $, assign each $ e\in \mathbf{E}\cup\widehat{\mathbf{E}}$ with the sum of its transmission delay $ T_{trans} $ and propagation delay $ T_{prop} $.

\textbf{STEP 1.7:} For every subtask $ \tau_l $, assign each $ e\in \textbf{E}_{virtual}$ with its computation delay $ T_{comp} $.

After performing the steps above, the metagraph $ \mathbb{G}_F=\big(\mathbb{V}_F,\mathbb{E}_F\big)=\big(\mathbf{V}\cup\widehat{\mathbf{V}}\cup u \cup v,\textbf{E}\cup\widehat{\textbf{E}}\cup \textbf{E}_{u,G(t)}\cup\textbf{E}_{v,G(t),\widehat{G}(t)}\big)$\footnote{The edges in LEO satellite network $ G(t) $ and its duplicate $ \widehat{G}(t) $ are bidirectional, which enables tasks to be offloaded to any LEO satellite for computation; whereas the inter-tier edges are unidirectional, which leads to each path containing at most one virtual edge (i.e., each subtask being processed by at most one LEO satellite).} under the constraints of the fusion offloading scheme is obtained. The graph is used for finding the offloading path from $ u $ to $ v $.

Depending on the edges contained in the offloading path, tasks can be offloaded to ground servers, visible LEO satellites or invisible LEO satellites for computation. Fig.~\ref{FusionNetwork} (ii) illustrates three typical scenarios.

\begin{itemize}
	\item \textit{Path 1} ($ u \rightarrow g \rightarrow h \rightarrow l \rightarrow v $) offloads tasks to ground servers.
	\ding{172} Raw data of tasks are transmitted from source node $ u $ to node $ g $ which locates within the visible range of $ u $.
	\ding{173} Raw data of tasks are routed from $ g $ to $ l $ in $ G(t) $ through ISLs.
	\ding{174} Raw data of tasks are transmitted from $ l $ in $ G(t) $ to the destination node $ v $ on the ground for computation.
	In general, tasks are offloaded to the ground when the offloading path does not contain any virtual edges, since virtual edges represent the onboard computing process.
	
	\item \textit{Path 2} ($ u \rightarrow k \rightarrow \widehat{k} \rightarrow \widehat{l} \rightarrow v $) offloads tasks to visible LEO satellites.
	\ding{172} Raw data of tasks are transmitted from source node $ u $ to node $ k $ which locates within the visible range of $ u $.
	\ding{173} Edge $ k \rightarrow \widehat{k} $ indicates that tasks are computed by the LEO satellite associated with node $ k $.
	\ding{174} The computing results are routed from $ \widehat{k} $ to $ \widehat{l} $ in $ \widehat{G}(t) $.
	\ding{175} The computing results are further transmitted from $ \widehat{l} $ to the destination node $ v $ on the ground.
	In general, tasks are offloaded to the corresponding visible LEO satellite, when the offloading path contains a virtual edge which connects one of the visible nodes of $ u $.
	
	\item \textit{Path 3} ($ u \rightarrow j \rightarrow n \rightarrow \widehat{n} \rightarrow \widehat{o} \rightarrow \widehat{p} \rightarrow v $) offloads tasks to invisible LEO satellites.
	\ding{172} Raw data of tasks are transmitted from $ u $ to its visible VN $ j $.
	\ding{173} Raw data of tasks are routed from $ j $ to $ n $ in $ G(t) $ through ISLs.
	\ding{174} Edge $ n \rightarrow \widehat{n} $ indicates that tasks are computed on satellites associated with $ n $.
	\ding{175} The computing results are routed from $ \widehat{n} $ to $ \widehat{p} $ in $ \widehat{G}(t) $.
	\ding{176} The computing results are further transmitted from $ \widehat{p} $ to the destination $ v $ on the ground. Because nodes in $ G(t) $ are connected with ISLs, tasks can be offloaded to satellites associated with any VN for computing.
	In general, tasks are offloaded to an invisible LEO satellite, when the offloading path contains a virtual edge but the virtual edge does not connect to one of the visible nodes of $ u $.
\end{itemize}

%In general, since virtual edges represent the onboard computing process, tasks are offloaded to the ground when the offloading path does not contain any virtual edges. When the offloading path contains a virtual edge which connects one of the visible nodes of $ u $, tasks are offloaded to the corresponding visible LEO satellite; otherwise, tasks are offloaded to an invisible LEO satellite. 

%From the analyses above, it is clear that the proposed metagraph-based transmission and computation fusion network for space missions offloading can offload tasks to any node with computing resources in the whole network, which facilitates the reasonable utilization of network-wide computing resources and addresses the local shortage of computing resources due to unbalanced network load.

Among many viable offloading decisions of the three scenarios, the optimal one can be obtained by finding the shortest path in the weighted metagraph. Algorithm~\ref{algo:meta} presents the detailed steps for instantiating the graph, assigning computation and transmission delays, and obtaining the optimal decision with Dijkstra's single-source shortest path algorithm.

%TODO：没有体现计算前后待传数据量的变化，line23不对，原网络和copy的网络权值计算是不一样的
%TODO:需要说明destination和ground server之间的关系
%TODO：卸载到地面的两种路线的权值是不一样的，这里也没有体现
\begin{algorithm}[h!]
	\caption{Metagraph-based computation and transmission fusion task offloading scheme\label{algo:meta}}%\small
	\KwData{Network resourece set ($ \mathbf{C}, \mathbf{R}_{ISL}, \mathbf{R}_{SGL} $)}
	\KwData{Task $ \Lambda_ {\mathcal{T}} = (u,v,n,t,\vartheta) $}
	\KwData{Subask $ \Lambda_ {\tau_l} = (\widetilde{C_l},\widetilde{N_l}) $}
	\KwResult{The optimal offloading path $ \pi $ and its length $ L $}
	\KwResult{Updated network resourece set ($ \mathbf{C}, \mathbf{R}_{ISL}, \mathbf{R}_{SGL} $)}

	Generate the VN-based network model $ G(t)=(\mathbf{V},\mathbf{E}) $ for the LEO satellite network;
	
	Initialize the virtual edge set $ \mathbf{E}_{virtual} \leftarrow \emptyset $ and the weighted metagraph $ \mathbb{G}_F= (\mathbb{V}_F,\mathbb{E}_F,\mathbb{W}_F) \leftarrow (\emptyset,\emptyset,\emptyset)$;
	
	Update $ \mathbb{V}_F \leftarrow \mathbf{V},\ \mathbb{E}_F \leftarrow \mathbf{E}$;
	
	Generate the network duplicate $\widehat{G}(t) \leftarrow (\widehat{\mathbf{V}},\widehat{\mathbf{E}})\leftarrow G(t) $;
	
	Update $ \mathbb{V}_F\leftarrow \mathbb{V}_F \cup \widehat{\mathbf{V}} ,\  \mathbb{E}_F\leftarrow \mathbb{E}_F \cup \widehat{\mathbf{E}}$;
	
	Update $ \mathbb{V}_F \leftarrow \mathbb{V}_F\cup u\cup v $;
	
	\ForEach{$ i\in \mathbf{V} $}{
	   $ \mathbb{E}_F\leftarrow \mathbb{E}_F \cup e(i,\widehat{i}),\ \mathbf{E}_{virtual} \leftarrow \mathbf{E}_{virtual}\cup e(i,\widehat{i})$;}%connect $ i $ and $ \widehat{i} $ with an virtual edge
    
    \ForEach{$ i\in \mathbf{V} $}{
   		\If{$ i \Leftrightarrow u $}{$  \mathbb{E}_F\leftarrow \mathbb{E}_F \cup e(u,i)  $;}
   		\If{$ i \Leftrightarrow v $}{$  \mathbb{E}_F\leftarrow \mathbb{E}_F \cup e(i,v) \cup e(\widehat{i},v)  $;}
   	}    

    \ForEach{$ \tau_l \in \mathcal{T}  $}{
    	\ForEach{$ e\in \mathbf{E}_{virtual} $}{
    		Calculate the computation delay $ T_{comp} $ of processing $  \tau_l $ on $ e $ with $\mathbf{C}$;
    		
	        Set $ \omega_{e} \leftarrow T_{comp},\ \mathbb{W}_F \leftarrow\mathbb{W}_F\cup \omega_{e} $;% as Equation *;
    }
		\ForEach{$ e\in \mathbb{E}_F\backslash \mathbf{E}_{virtual} $}{
	        Calculate the transmission delay $ T_{trans} $ of transmitting $  \tau_l $ on $ e$ with $\mathbf{R}_{ISL}$ or $\mathbf{R}_{SGL}$ ;
	        
	        Set $ \omega_{e} \leftarrow T_{trans}+T_{prop} ,\ \mathbb{W}_F \leftarrow\mathbb{W}_F\cup \omega_{e} $;% as Equation *;
    }
        Find the shorest path $ \pi$ and its length $ L $ in $ \mathbb{G}_F=\{\mathbb{V}_F,\mathbb{E}_F,\mathbb{W}_F\} $ with shortest path algorithms;
        
        \If{$ l \leq \vartheta $}{
           Output the optimal offloading path $ \pi $ and the path length $ L $ for subtask $ \tau_l $;
           
           Update the resource matrices $ \mathbf{C} $, $ \mathbf{R}_{ISL} $ and  $ \mathbf{R}_{SGL} $;}
    }
\end{algorithm}

Line 1--17 of Algorithm~\ref{algo:meta} presents detailed steps for constructing a metagraph.
The algorithm first initializes necessary variables (line 1--3) and copies the original network into it.
Next, the algorithm generates the duplicate to encode the computation state change (line 4 and 5).
With multiple base tiers available, the algorithm generates edges between the tiers to form a complete metagraph: the source and destination nodes (line 6), the virtual edges (line 7--9), and the visible nodes of the source/destination nodes (line 11--13 and line 14--16, respectively).

Line 18--32 of Algorithm~\ref{algo:meta} presents the offloading strategy generation using path-finding with the generated metagraph.
The main loop iterates each task through all pending tasks.
For each task, the algorithm first computes the metagraph's edge weights (line 19--26) according to task properties and available resources: for virtual edges (used to represent computation), the weights are dependent on the computation delay (line 19--22); similarly, for concrete edges (used to represent transmission), the weights are dependent on the transmission delay (line 23--26).
After the weights are assigned for all edges, the existing shortest path algorithm is invoked to find the optimal path and delay (line 27). 
Finally, the algorithm checks whether the task can be completed within the given delay threshold (line 28).
In this case, the task can be executed, and the optimal offloading strategy is output according to the shortest path (line 29). Since executing the task requires resources, line 30 of the algorithm updates the matrices to reflect the change.

%\begin{definition}[	Delay ]
%	Delay, transmission delay, computation delay, propagation delay, waiting delay
%\end{definition}
\subsection{Metagraph-Based Conventional Task Offloading}\label{MetaOfConventionalScheme}
Metagraphs possess favorable abstraction and expressiveness capability. It is also capable of representing other task offloading schemes. 
To compare the performance of the proposed offloading scheme with the conventional offloading schemes introduced in Section~\ref{ConventionalScheme} theoretically, we construct the metagraphs of these conventional task offloading schemes and present them in Fig.~\ref{ConventionalScheme}.
%As stated in Section~\ref{Introduction}, tasks are typically offloaded to visible LEO satellites or to ground servers via LEO satellite network. The metagraphs of the conventional task offloading schemes introduced in Section~\ref{ConventionalScheme} are shown in Fig.~\ref{ConventionalScheme} and elaborated as follows.
%【上一节已经介绍了这两种卸载情况，这里又将它们建模成metagraph是为了从理论上对比性能】
\begin{figure}[h]
	\centering
	\setlength{\abovecaptionskip}{-0.cm}
	\setlength{\belowcaptionskip}{-0.cm}
	\includegraphics[width=0.45\textwidth]{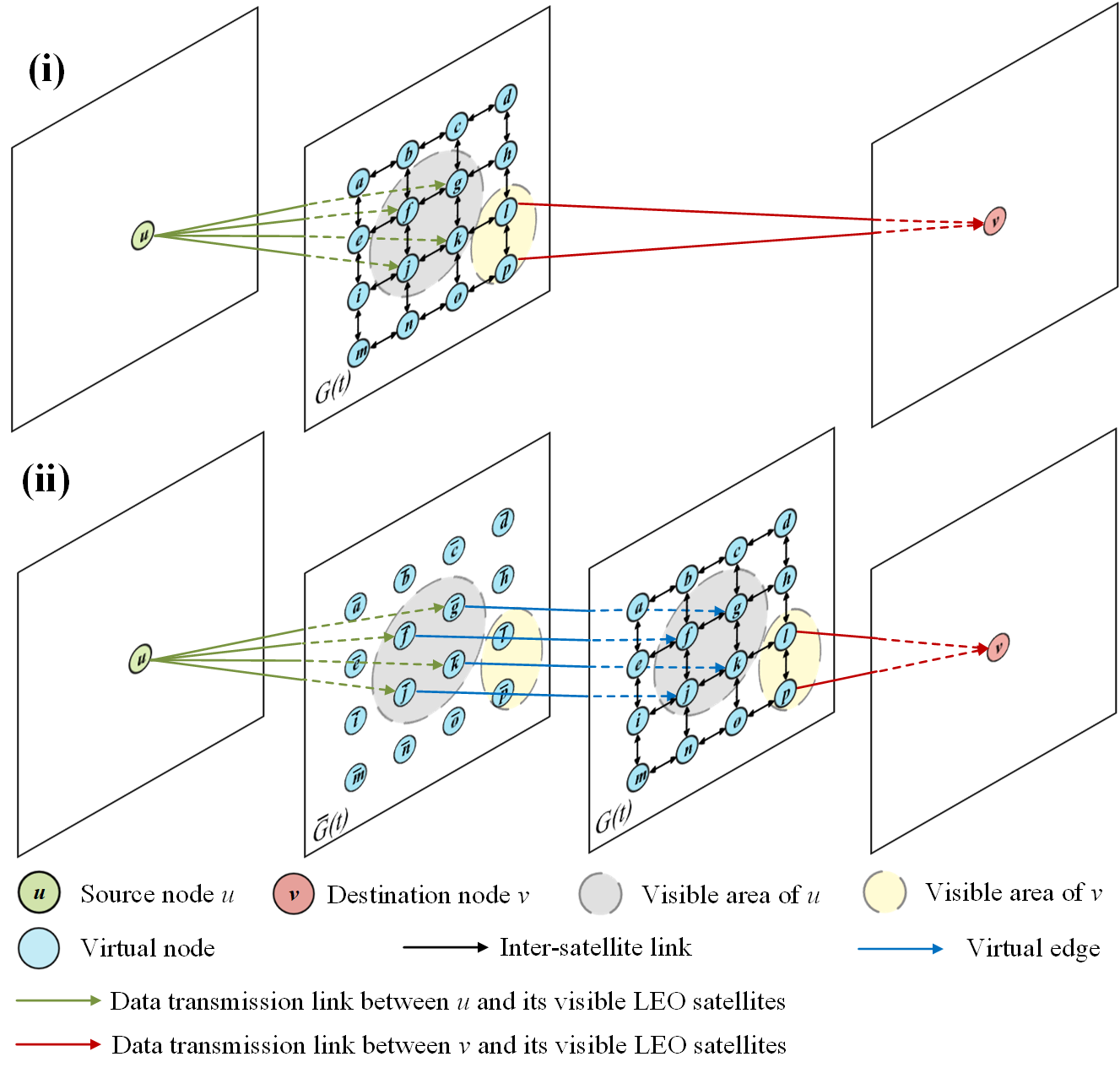}\\
	\begin{center}
		\caption{Metagraph representation for conventional task offloading schemes. (i): ground-based offloading; (ii): visible-LEO-satellite-based offloading.}\label{ConventionalScheme}
	\end{center}	 
\vspace{-0.5em}
\end{figure}
%TODO：把STEP写成伪代码模式
Figure~\ref{ConventionalScheme} (i) illustrates the ground offloading scheme, where raw data of tasks are transmitted via the LEO satellite network to ground servers for computation. Formally, the metagraph for this scheme is constructed in the following three steps:

\textbf{STEP 2.1:} The same as STEP 1.1 and STEP 1.4.

\textbf{STEP 2.2:} Connect $ v $ and nodes whose associated satellites are visible to $ v $ in $ G(t) $. The set of edges connecting $ v $ and its visible nodes is defined as $ \textbf{E}_{v,G(t)} $. 

\textbf{STEP 2.3:} The same as STEP 1.6.

After the construction, the metagraph under the constraints of the ground offloading scheme is $ \mathbb{G}_{G}=\big(\mathbb{V}_{G},\mathbb{E}_{G}\big)=\big(\mathbf{V}\cup u \cup v,\textbf{E}\cup\textbf{E}_{u,G(t)}\cup\textbf{E}_{v,G(t)}\big)$.

%【对于基准算法，只讲metagraph怎么建，不讲具体写在过程，因为包含在上一小节的三个Path的详述中】
%computation tasks are transmitted from $ u $ to its visible LEO satellites $\mathbf{S}_{visible}^u=\{f,g,j,k\} $ on another sub-graph $ G(t) $ (i.e., green lines). Then these tasks are routed from $ \mathbf{S}_{visible}^u $ to the visible satellites of destination $ v $ (i.e., $\mathbf{S}_{visible}^v=\{l,p\} $) via selected paths in LEO satellite network $ G(t) $. Finally, the raw data of computation task are offloaded to the destination $ v $ via SGLs (i.e., red lines).
%地面处理时延忽略不计应该在哪个位置说？

Figure~\ref{ConventionalScheme} (ii) illustrates the visible offloading scheme. In this scheme, tasks are offloaded to neighboring visible LEO satellites for computation; next, the computing results are routed to the ground as the destination. Formally, the metagraph for this scheme is constructed in the following five steps:

\textbf{STEP 3.1:} The same as STEP 1.1.

\textbf{STEP 3.2:} Generate a duplicate of $ G(t) $ and remove all edges (i.e., ISLs) in this duplicate\footnote{The reason for removing the edges in the duplicate is that tasks can only be transmitted to the visible satellites and are not allowed to be transmitted in the LEO satellite network until these tasks have been processed.}. In this way, a novel graph $ \bar{G}(t) = (\bar{\textbf{V}},\bar{\textbf{E}}) $ is formed, where $ \bar{\textbf{V}}=\textbf{V} $ and $ \bar{\textbf{E}} =\emptyset $.

\textbf{STEP 3.3:} Connect each visible node of $ u $ in $ \bar{G}(t) $ and its corresponding node in $ G(t) $ with virtual edges whose set is $ \mathbf{E}_{virtual}^{'} $.

\textbf{STEP 3.4:} Connect $ u $ with its visible nodes in $ \bar{G}(t) $, which allows data transmission of tasks from $ u $ to its visible LEO satellites.

\textbf{STEP 3.5:} The same as STEP 2.2, 1.6, and 1.7.

After the construction, the metagraph under the constraints of the visible offloading scheme is $ \mathbb{G}_{V}=\big(\mathbb{V}_{V},\mathbb{E}_{V}\big)=\big(\mathbf{V}\cup \bar{\textbf{V}} \cup u \cup v,\textbf{E}\cup \mathbf{E}_{virtual}^{'} \cup \textbf{E}_{u,\bar{G(t)}}\cup\textbf{E}_{v,G(t)}\big)$. .

%通过分别将传输与计算融合的全网卸载策略、地面卸载策略和可见低轨卫星卸载策略进行基于元图的网络构建，可以发现基于地面卸载策略构建的网络（图7(i)）和基于可见低轨卫星卸载策略构建的网络（图7(ii)）都是传输与计算融合网络（图6(i)）的子图。因此，地面卸载策略和可见低轨卫星卸载策略卸载路径集合是融合网络卸载路径的子集。即地面卸载策略和可见低轨卫星卸载策略下，能够获得最小时延的最优卸载路径一定包含在融合网络卸载路径集合中。进而可以得出结论：对于任意卸载任务，传输与计算融合任务卸载的最优卸载路径的时延总小于或等于地面卸载和低轨卫星卸载的最优卸载路径的时延。
\subsection{Performance Comparison}
Compare Fig~\ref{ConventionalScheme} (i) and Fig~\ref{FusionNetwork} (i), and we can conclude that $ \mathbb{G}_G $ --- the metagraph of the \textit{ground offloading scheme} --- is a subgraph of $ \mathbb{G}_F $, the metagraph of the proposed \textit{fusion offloading scheme} scheme. 
Similarly, $ \mathbb{G}_V $ (the metagraph of the \textit{visible offloading scheme} shown in Fig~\ref{ConventionalScheme} (ii)) is also a subgraph of $ \mathbb{G}_F $. Therefore, Theorem~\ref{theorem1} can be obtained.

\begin{theorem}
	For any task, the minimum overall delay that can be obtained by the fusion offloading scheme always does not exceed the minimum delay that can be obtained by the visible offloading scheme and the ground offloading scheme.\label{theorem1}
\end{theorem}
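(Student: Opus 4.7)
The plan is to reduce the theorem to a standard graph-theoretic fact: shortest-path length is monotone under subgraph inclusion (with consistent weights) when the source and destination are preserved. Concretely, if $\mathbb{G}' = (\mathbb{V}', \mathbb{E}')$ is a subgraph of $\mathbb{G} = (\mathbb{V}, \mathbb{E})$ with $u, v \in \mathbb{V}'$, and every edge $e \in \mathbb{E}'$ carries the same weight in $\mathbb{G}$ as in $\mathbb{G}'$, then every $u$--$v$ path in $\mathbb{G}'$ is also a $u$--$v$ path in $\mathbb{G}$, so the minimum in $\mathbb{G}$ is taken over a superset and therefore cannot exceed the minimum in $\mathbb{G}'$. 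Since Algorithm~\ref{algo:meta} and the analogous construction for each baseline equate the minimum overall delay with the shortest $u$--$v$ path length in the corresponding metagraph, it suffices to verify the subgraph relation and weight consistency for both $\mathbb{G}_G \hookrightarrow \mathbb{G}_F$ and $\mathbb{G}_V \hookrightarrow \mathbb{G}_F$.

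First I would compare $\mathbb{G}_G = (\mathbf{V} \cup \{u,v\},\, \mathbf{E} \cup \mathbf{E}_{u,G(t)} \cup \mathbf{E}_{v,G(t)})$ with $\mathbb{G}_F = (\mathbf{V} \cup \widehat{\mathbf{V}} \cup \{u,v\},\, \mathbf{E} \cup \widehat{\mathbf{E}} \cup \mathbf{E}_{u,G(t)} \cup \mathbf{E}_{v,G(t),\widehat{G}(t)})$. Every vertex in $\mathbb{V}_G$ appears in $\mathbb{V}_F$. The edge sets $\mathbf{E}$ and $\mathbf{E}_{u,G(t)}$ appear verbatim in both, and $\mathbf{E}_{v,G(t)} \subseteq \mathbf{E}_{v,G(t),\widehat{G}(t)}$ by construction in STEP~1.5. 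Weights assigned by STEP~1.6 and STEP~2.3 coincide because both use $T_{trans} + T_{prop}$ computed from the same underlying network resources. Hence $\mathbb{G}_G$ is a weighted subgraph of $\mathbb{G}_F$, and the monotonicity lemma gives $L_F \le L_G$.

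Next I would handle $\mathbb{G}_V$, which is slightly more delicate because of the pruned duplicate $\bar G(t) = (\mathbf{V}, \emptyset)$ and the restricted virtual edge set $\mathbf{E}_{virtual}'$ that connects only the visible nodes of $u$. I would embed $\bar{\mathbf{V}}$ into $\widehat{\mathbf{V}}$ via the natural identification $i \leftrightarrow \widehat{i}$, so that $\bar{\mathbf{E}} = \emptyset \subseteq \widehat{\mathbf{E}}$, $\mathbf{E}_{virtual}' \subseteq \mathbf{E}_{virtual}$, $\mathbf{E}_{u,\bar G(t)} \subseteq \mathbf{E}_{u,G(t)}$, and $\mathbf{E}_{v,G(t)} \subseteq \mathbf{E}_{v,G(t),\widehat{G}(t)}$. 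Weight consistency again follows because virtual edges are weighted by $T_{comp}$ and actual edges by $T_{trans} + T_{prop}$ in both schemes, all computed from the same $(\mathbf{C}, \mathbf{R}_{ISL}, \mathbf{R}_{SGL})$ and the same subtask parameters. This yields $\mathbb{G}_V$ as a weighted subgraph of $\mathbb{G}_F$ and therefore $L_F \le L_V$.

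The step I expect to be the main obstacle is the $\mathbb{G}_V$ embedding, since one must argue that identifying $\bar{\mathbf{V}}$ with $\widehat{\mathbf{V}}$ preserves the physical semantics of every path: a path in $\mathbb{G}_V$ of the form $u \to i \to \widehat{i} \to v$ must correspond to a feasible path in $\mathbb{G}_F$ with the same total weight, namely $u \to i \to \widehat{i} \to v$ using the same virtual edge and the same ground-bound edge from $\widehat{i}$. Once this identification is made explicit, the monotonicity argument concludes the proof, and combining the two inequalities yields $L_F \le \min\{L_G, L_V\}$, which is precisely the claim of Theorem~\ref{theorem1}.
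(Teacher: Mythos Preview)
Your approach is exactly the paper's: show that $\mathbb{G}_G$ and $\mathbb{G}_V$ embed as weighted subgraphs of $\mathbb{G}_F$ and then invoke shortest-path monotonicity. The paper's own proof is in fact terser than yours---it simply asserts the subgraph relation and the path-set inclusion $\pi_G\in\mathbb{P}_F$ without spelling out any identification---so your level of care is already beyond what is needed to match it.

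One small slip worth fixing in the $\mathbb{G}_V$ step: your stated identification $\bar{\mathbf{V}}\to\widehat{\mathbf{V}}$ is backwards. In $\mathbb{G}_V$ the source $u$ attaches to the edgeless tier $\bar G(t)$ (STEP~3.4), computation is the virtual edge $\bar i\to i$, and post-computation routing happens in $G(t)$; in $\mathbb{G}_F$ the source attaches to $G(t)$, computation is $i\to\widehat{i}$, and post-computation routing happens in $\widehat G(t)$. Hence the semantics-preserving map sends $\bar{\mathbf{V}}\to\mathbf{V}$ and the $G(t)$-tier of $\mathbb{G}_V$ to $\widehat{\mathbf{V}}$. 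Under your stated map, $\mathbf{E}_{u,\bar G(t)}$ would land on edges $u\to\widehat{i}$, which do not exist in $\mathbb{G}_F$, so the containment $\mathbf{E}_{u,\bar G(t)}\subseteq\mathbf{E}_{u,G(t)}$ you wrote would fail. With the corrected orientation all four containments and the weight consistency hold, and your argument goes through unchanged.
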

%The proofs are as follows.

\begin{proof}
	Assuming that $ \mathbb{G}_F=(\mathbb{V}_F,\mathbb{E}_F) $, $ \mathbb{G}_G=(\mathbb{V}_G,\mathbb{E}_G) $ and $ \mathbb{G}_V=(\mathbb{V}_V,\mathbb{E}_V) $ are the metagraphs of the proposed \textit{fusion offloading scheme}, the \textit{ground offloading scheme} and the \textit{visible offloading scheme}, respectively. 
	Based on the construction methods introduced in Section~\ref{FusionScheme} and Section~\ref{MetaOfConventionalScheme}, it can be concluded that $ \mathbb{V}_F=\mathbb{V}_G=\mathbb{V}_V $, $ \mathbb{E}_G \subset \mathbb{E}_F $ and $ \mathbb{E}_V \subset \mathbb{E}_F $. Therefore, $ \mathbb{G}_G $ and $ \mathbb{G}_V $ are subgraphs of $ \mathbb{G}_F $. %Metagraph
	
	For any task whose source node is $ u (u\in\mathbb{V}_F) $ and destination node is $ v(v\in\mathbb{V}_F) $, assuming that $ \mathbb{P}_{F} $,  $ \mathbb{P}_{G} $ and $ \mathbb{P}_{V} $ are the set of paths from $ u $ to $ v $ in $ \mathbb{G}_F $, $ \mathbb{G}_G $ and $ \mathbb{G}_V $, respectively. %路径集合
	The optimal paths that can obtain the minimum delay in $ \mathbb{G}_F $, $ \mathbb{G}_G $ and $ \mathbb{G}_V $ are defined as $ \pi_{F} $, $ \pi_{G} $ and $ \pi_{V} $, and their path lengths are $ l_{F} $, $ l_{G} $ and $ l_{V} $, respectively. 
	
	Since $ \mathbb{E}_G \subset \mathbb{E}_F$ and $ \pi_{G} $ is a path in $ \mathbb{G}_G $, $ \pi_{G} $ is a path in $ \mathbb{G}_F $ (i.e., $ \pi_{G} \in \mathbb{P}_{F} $). 
	%Also because $ u,v\in \pi_{Ground}$, we can get $ \pi_{Ground} \in \mathbb{P}_{Fusion} $. 
	Since $ \pi_{F} $ is the optimal path of $ \mathbb{G}_F $, for any path $ p \in \mathbb{P}_{F} $ whose length is $ l_p $, $ l_{F} \leq l_p $. Because $ \pi_{G} \in \mathbb{P}_{F} $, we can get $ l_{F} \leq l_{G}$. The relation between $ l_{F}$ and $ l_{V}$ can be proofed in a similar way.
	$\hfill\blacksquare$
\end{proof}
\section{Simulation Results and Analyses}
In this section, we evaluate the metagraph-based fusion offloading scheme proposed in Section~\ref{AlgorithmSteps} whose performance is compared with two typical benchmark schemes introduced in Section~\ref{Introduction} (i.e., the ground offloading scheme and the visible offloading scheme). 
We would like to emphasize that the simulations in this paper are focused on scheme-level comparison, which does not involve detailed comparison on individual algorithms. 
Because individual algorithms' task division methods are crucial to the final results, we control this variable by assuming the smallest unit of transmission and computation to be a subtask.

We aim to answer the following research questions.

\begin{itemize}
	\item \textbf{RQ1}: Does the proposed scheme reduce the overall delay?
	\item \textbf{RQ2}: How applicable is the proposed scheme, especially for systems with low computation and transmission capabilities?
	\item \textbf{RQ3}: Is the performance of the proposed scheme consistent for tasks with different computational requirements or data volumes?
\end{itemize}

The parameter settings and simulation results are elaborated as follows.
A Walker constellation with 8 orbits is adopted in this section. There are 16 satellites evenly distributed on each orbit.
Network-related parameters are set based on the-state-of-the-art LEO satellite techniques introduced in Section~\ref{Background}. We simulate a 10-second network task generation. Other task-related parameters follow Section~\ref{TrafficModel}. Since images generated by satellite sensors can reach Gigabit at most~\cite{8896440}, the data volume of each subtask is set as 0.1 GB.\ The energy and storage resources of LEO satellites are assumed to be sufficient, because they are related to the specific hardware devices and algorithms. We leave them to future work.
The basic parameters are summarized in Table~\ref{T1}. If not stated additionally, the following experiments are conducted according to the parameters listed in Table~\ref{T1}.
%The image generated by the satellite sensor can reach Gigabit at most. So the transmission k will have a large delay, and it will occupy a large satellite link bandwidth. At the same time, a large amount of image data is processed in the cloud computing data center of the satellite ground station, which increases the processing pressure of the satellite ground station data center \cite{8896440}.
The core metric used for evaluation is weighted average delay.

\begin{table}[h]
	\centering
	\begin{spacing}{1}
		\caption{Simulation Parameters}\label{T1}
	\end{spacing}
	\small
	\begin{tabular}{cc}
		\toprule
		Parameter & Value\\
		\midrule
		Radius of the earth $ R_e $                  & 6,371,393 m\\
		Mass of the earth $ M_e $                    & $ 5.965\times10^{24}$ kg\\
		Earth rotation angular velocity $ \omega_e $ & $ 7.29211510\times 10^{-5} $ \\
		Gravitational  $ G $                         & $ 6.67428\times 10^{-11} $\\
		Kepler constant $ K $                        & $ 3.9860\times 10^{14} $\\
		Velocity of light $ c $                      & $ 299,792,458$ m/s\\	
		Number of orbits $ N_o $                     & 8\\
		Satellites per orbit $ N_s $                 & 16\\        
		Orbit altitude $ h $                         & 500 km\\
		Orbit inclination $ i_0 $                    & $ 90^{\circ} $\\
		Max computing capability of satellites $ C $ & 100 GFLOPS\\    
		Max data rate of SGL                        & 0.2 Gbps\\
		Max data rate of ISL                     & 5 Gbps\\
		Channel number per ISL                       & 1 \\
		Network load $ \mathcal{L} $                 & 200 \\
		Task generation duration                     & 10s\\
		Subtask number per task $ \widetilde{Q} $    & 2\\
		Computational requirement per subtask $ \widetilde{C} $ & 100 GFLO\ \\
		Data volume per subtask $ \widetilde{N} $    & 0.1 Gbps\\
		Delay threshold $ \theta_{th} $              & 300 s \\	
		\bottomrule
	\end{tabular}
\end{table}

%\begin{definition}[	Successful offloading rate]
%	The successful offloading rate is the ratio of the number of tasks whose delay is not greater than the delay threshold (i.e., the maximum tolerable delay) $ \theta_{th} $ to the total number of tasks.
%\end{definition}
%【因为多种方法时延相差太大，时延门限的设置没有意义，设置高了，Visible和Fusion成功率都是100，设置低了Ground成功率为0】
%\begin{definition}[	Average delay]
%The average delay is the ratio of the sum delay of all successfully offloaded tasks generated in all zones during the simulation time to the number of successfully offloaded tasks.
%\end{definition}

\begin{definition}[Weighted average delay]
	The weighted average delay is the ratio of the sum overall delay of all tasks generated in all zones during the simulation time to the total number of tasks, where the overall delay of any task that fail to offload is set to the delay threshold $ \theta_{th} $.
\end{definition}
%
%\textbf{3. Simulation Results and Analyses }\\
%a. Computing node and path selection results under different network loads (mark each component of overall delay on this picture)\\
%a1. Ultra-heavy load -- equal to offload to ground servers\\
%a2. Ultra-light load -- equal to offload to visible LEO satellites\\
%a3. Unbalance load -- selected computing nodes different from both benchmark algorithms.\\
%b1. Overall delay versus the computing capability of LEO satellites\\
%b2. Each component of overall delay (computation delay, ISL transmission delay, S2G transmission delay) versus the computing capability of LEO satellites\\
%c. Overall delay versus the network load\\
%d. Overall delay versus the STG data rate\\
%e. Overall delay versus the data volume and computational requirement\\
%【要着重分析三种方法性能之间的关系，并说明原因。整体趋势是什么样的，多个方法之间有什么关系】
%【对三个图的用意进行整体说明，第一个是仿真原始卸载情况，即只统计卸载成功的情况，第三个则是一个较为公平的比较】
%【还要说明有哪些是忽略不计的】
\subsection{Reduced Overall Delay (RQ1)}
To demonstrate how the proposed scheme reduces the overall delay, we evaluate all schemes under different system loads ($ \mathcal{L} $) and present the weighted average delay in Fig.~\ref{Change_network_load}.
Due to the large delay of the ground offloading scheme, the data on the curve corresponding to this scheme was set to 1/5 of the actual values for clear presentation.

\begin{figure}[htbp]
	\centering
	\vspace{-1.0em}
	\setlength{\abovecaptionskip}{-0.cm}
	\setlength{\belowcaptionskip}{-0.cm}
	\includegraphics[width=0.5\textwidth]{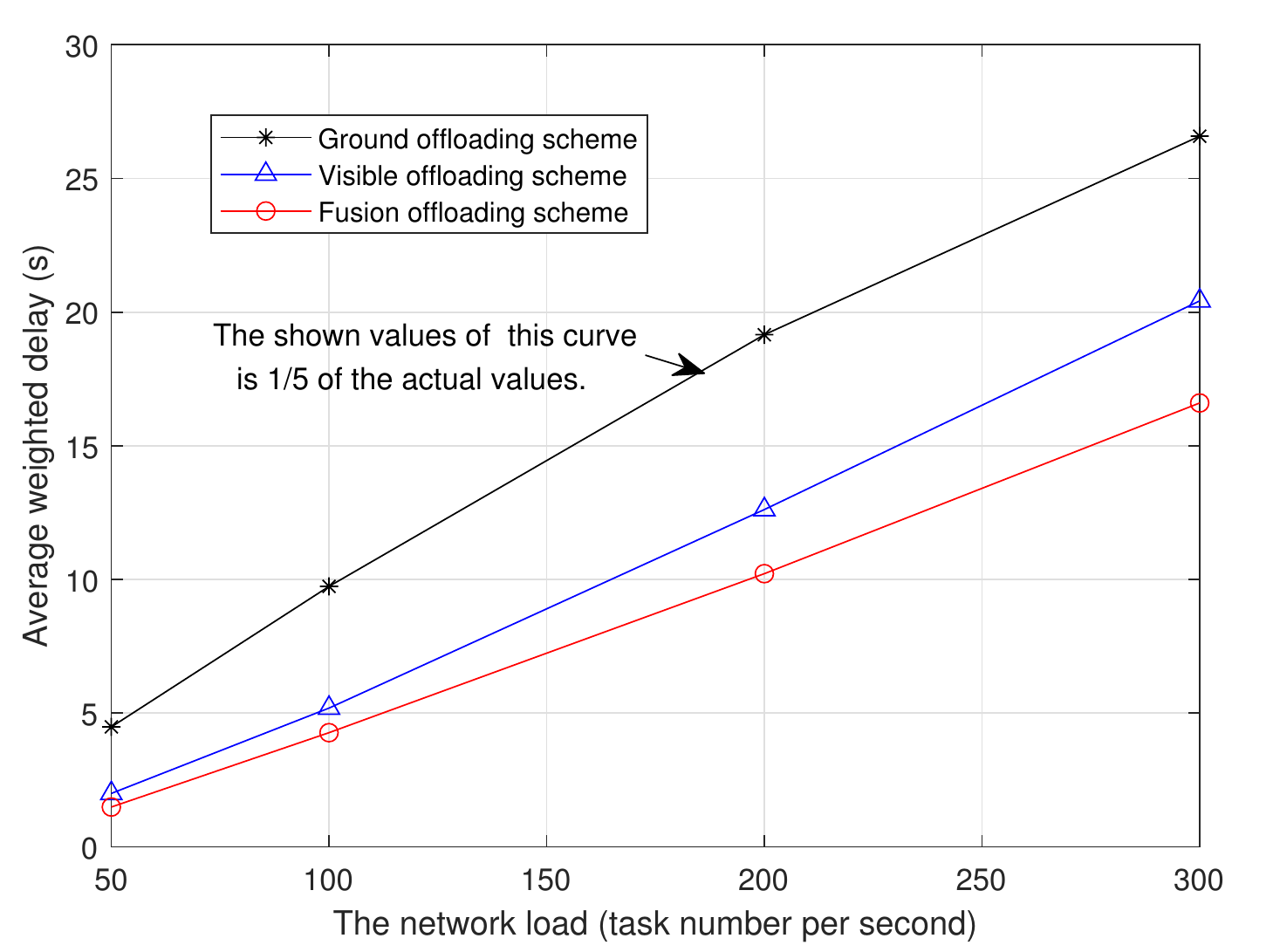}\\
	\begin{center}
		\caption{The weighted average delays versus the network load $ \mathcal{L} $.}\label{Change_network_load}
	\end{center}
\vspace{-1.0em}
\end{figure}

As shown in Fig.~\ref{Change_network_load}, the weighted average delays of all offloading schemes increase when the network loads become progressively heavier. As the number of tasks to be offloaded per second increases, the remaining transmission and computational resources in the network keep decreasing, so more time is needed to complete task offloading.
However, despite the increase in delays, the proposed fusion offloading scheme consistently achieves the lowest weighted average delay under all simulated loads. In particular, the proposed scheme decreases the weighted average delay up to 87.51\% and 18.70\% compared with the ground offloading scheme and the visible offloading scheme when $ \mathcal{L}=300 $. The reasons behind this can be explained by the flexible offloading target selection of the proposed scheme. More specifically, the proposed scheme is able to utilize the computing resources on invisible satellites to overcome the limitations of the benchmark schemes. Moreover, it allows the switch among different types of offloading targets based on the network load. That is, the proposed scheme will select ground servers or visible satellites to conduct computation between if they can achieve the minimum overall delay. Consequently, the proposed fusion offloading scheme can always achieve the minimum overall delay.

Inside Fig.~\ref{Change_network_load}, the growth rate of these offloading schemes are significantly different. Specifically, the curve of the proposed fusion offloading scheme has the lowest growth rate. It is because this scheme can offload tasks to invisible LEO satellites for computation. The fusion offloading scheme solves the problem of limited resources available on neighboring satellites in the visible offloading scheme and avoids the problem of excessive transmission delay caused by routing large amounts of raw data to ground servers in the ground offloading scheme.

To understand why the proposed scheme can lower the overall delay, Fig.~\ref{DelayComponent} investigates the detailed components of the overall delay of these offloading schemes when $ \mathcal{L}=200 $.
For fairness, the data is collected under the following principle: only when the fusion scheme selects to offload tasks to invisible satellites, the delay components of the fusion, ground and visible offloading scheme are counted. It is worth noting that to better distinguish the delay of the transmission process and the computation process, the transmission delay and the computation delay in Fig.~\ref{DelayComponent} include the waiting delay for the start of transmission or computation after reaching the corresponding node.
It is because when the fusion scheme offloads tasks to ground servers and visible satellites, the delay components of this scheme are exactly the same as two benchmark schemes.

\begin{figure}[h]
	\centering
	\vspace{-0.5em}
	\setlength{\abovecaptionskip}{-0.cm}
	\setlength{\belowcaptionskip}{-0.cm}
	\includegraphics[width=0.5\textwidth]{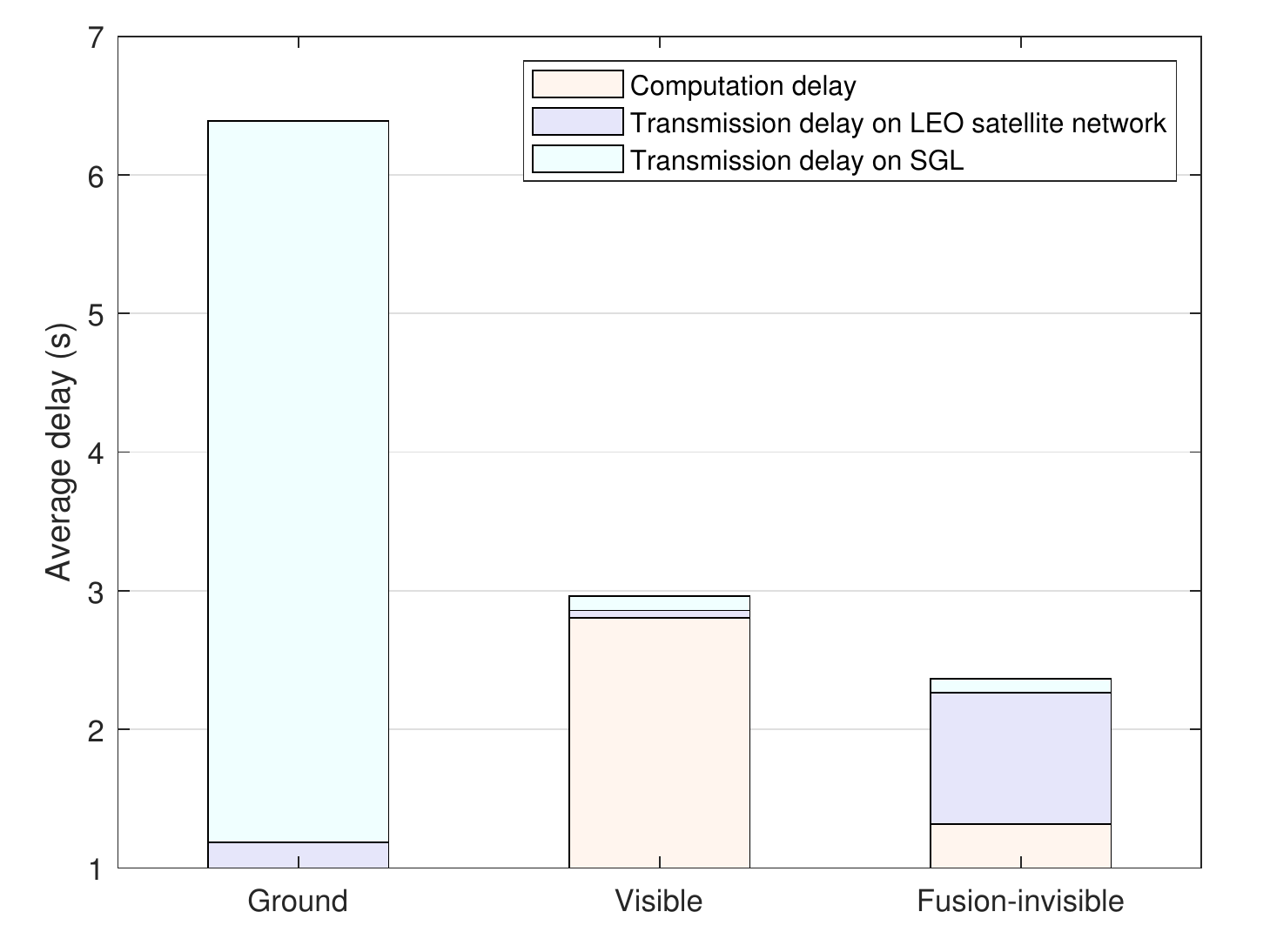}\\
	\begin{center}
		\caption{Breakdown of the overall delay when $ \mathcal{L}=200 $.}\label{DelayComponent}
	\end{center}
\vspace{-1.5em}
\end{figure}

According to Fig.~\ref{DelayComponent}, it can be concluded that both the ground offloading scheme and the visible offloading scheme have delay bottlenecks. For the ground offloading scheme, the transmission delay on SGLs is much higher than the other two schemes. It is because a large amount of raw data is transmitted on SGLs whose data transmission rate is very limited. The other two schemes only transmitted the computing results (usually a few bites) to the ground. For the visible offloading scheme, the computation delay (including the corresponding delay of waiting the computing resources available) is much higher than the other two schemes. It is because this scheme only offloaded tasks to visible LEO satellites; therefore, the available computing resources are quite limited. Tasks need to wait a long time before they can be processed. The proposed fusion scheme avoids the bottlenecks of the two benchmark schemes. The proposed scheme is able to balance the network load excellently, which is extremely important to hotspots with heavy loads.

%As shown in Figure~\ref{Change_network_load} (ii), the proposed \textit{computing-aware scheme} can always achieve higher successful offloading rate comparing with the benchmark offloading schemes no matter how $ \mathcal{L} $ changes. The reasons have been elaborated in the description of Figure~\ref{Change_network_load} (ii) and will not be repeated here.

%For fairness, the performance of these offloading scheme are compared in terms of the weighted average delay, which is shown in Figure~\ref{Change_network_load} (iii). Simulation results show that the proposed \textit{computing-aware scheme} can achieve the lowest weighted average delay no matter how the network load $ \mathcal{L} $ changes.
%\subsection{Offloading Targets Distribution}

To illustrate how the proposed scheme flexibly changes the offloading target according to the load, we present a concrete example in Fig.~\ref{Target_Distribution}. It shows the distribution of offloading targets for tasks generated in a selected hotspot. %The remaining parameters follow Table~\ref{T1}.【仿真参数想在标题上说】

\begin{figure}[t]
	\centering
	\includegraphics[width=0.5\textwidth]{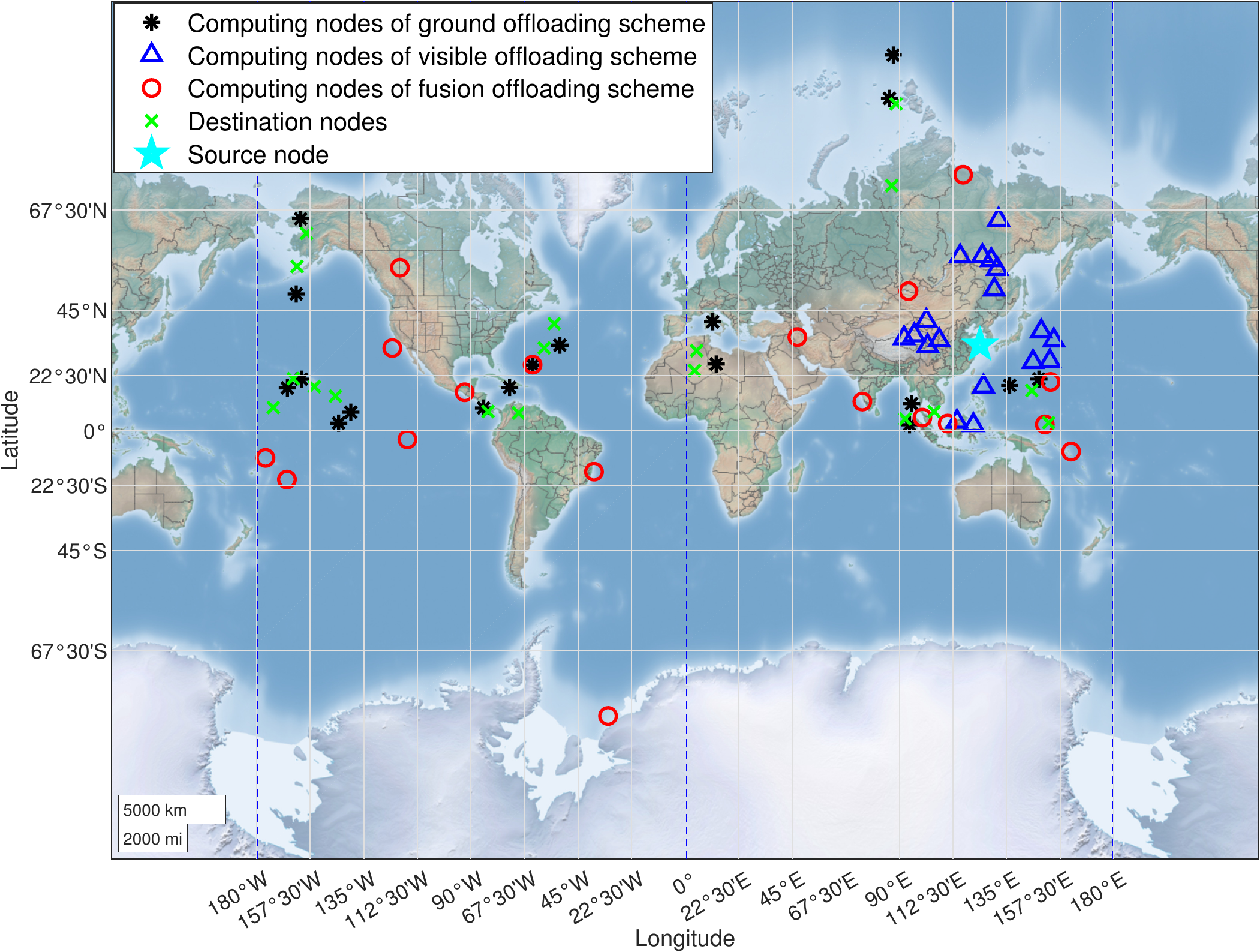}\\
	\begin{center}
		\caption{The offloading target distribution for tasks generated in a selected hot-spot (marked with a cyan star). Each dotted square represents a zone stated in Section~\ref{SystemModel} and corresponds to a node.}\label{Target_Distribution}
	\end{center}
%\vspace{-1.0em}
\end{figure}

Inside Fig.~\ref{Target_Distribution}, blue triangles (indicates the target chosen by visible offloading) are located around the cyan star (i.e., the source of these tasks); therefore, only limited computing resources on visible satellites can be utilized, which constricts the performance of this scheme severely.
The black star shows the distribution of offloading targets of the ground offloading scheme. It can be seen that the black stars are located in the same zone as the destinations of tasks, which means these tasks are offloaded to the corresponding destination (i.e., ground servers) for computation.
The red circles show the distribution of offloading targets of the fusion offloading scheme. It can be concluded that the offloading targets of this scheme are distributed over the entire network.
More specifically, there are some tasks that are offloaded to zones other than areas where visible satellites and destinations are located. These zones correspond to non-hot-spot areas and usually have computing resources available. In this way, the excessive loads in hotspots can be relieved, as well as avoiding long delays caused by transmitting huge amounts of raw data to the ground.

\begin{mdframed}
	The proposed fusion offloading scheme can effectively reduce the overall delay. In particular, the proposed scheme decreases the weighted average delay up to 87.51\% and 18.70\% compared with the ground offloading scheme and the visible offloading scheme when $ \mathcal{L}=300 $.
  % The root cause is that the proposed scheme not only increases available computing resources by allowing offloading tasks to invisible satellites, but it also provides the flexibility to switch offloading targets among ground servers, visible satellite and invisible satellites depending on network load. Therefore, the proposed fusion offloading scheme can always achieve the minimum overall delay.
\end{mdframed}

%Where are the tasks offloaded?

\subsection{Impacts from LEO Satellite Network Properties (RQ2)}
Computing capability can be a major factor for LEO-satellite-based offloading schemes. We investigate how weighted average delays of these offloading schemes vary with the computing capability of each LEO satellite.
The simulation covers computing capability from 50 GFLOPS to 500 GFLOPS. Its result is presented in Figure~\ref{Change_satellite_computing_capability}.

\begin{figure}[t]
	\vspace{-0.8em}
	\centering
	\setlength{\abovecaptionskip}{-0.cm}
	\setlength{\belowcaptionskip}{-0.cm}
	\includegraphics[width=0.5\textwidth]{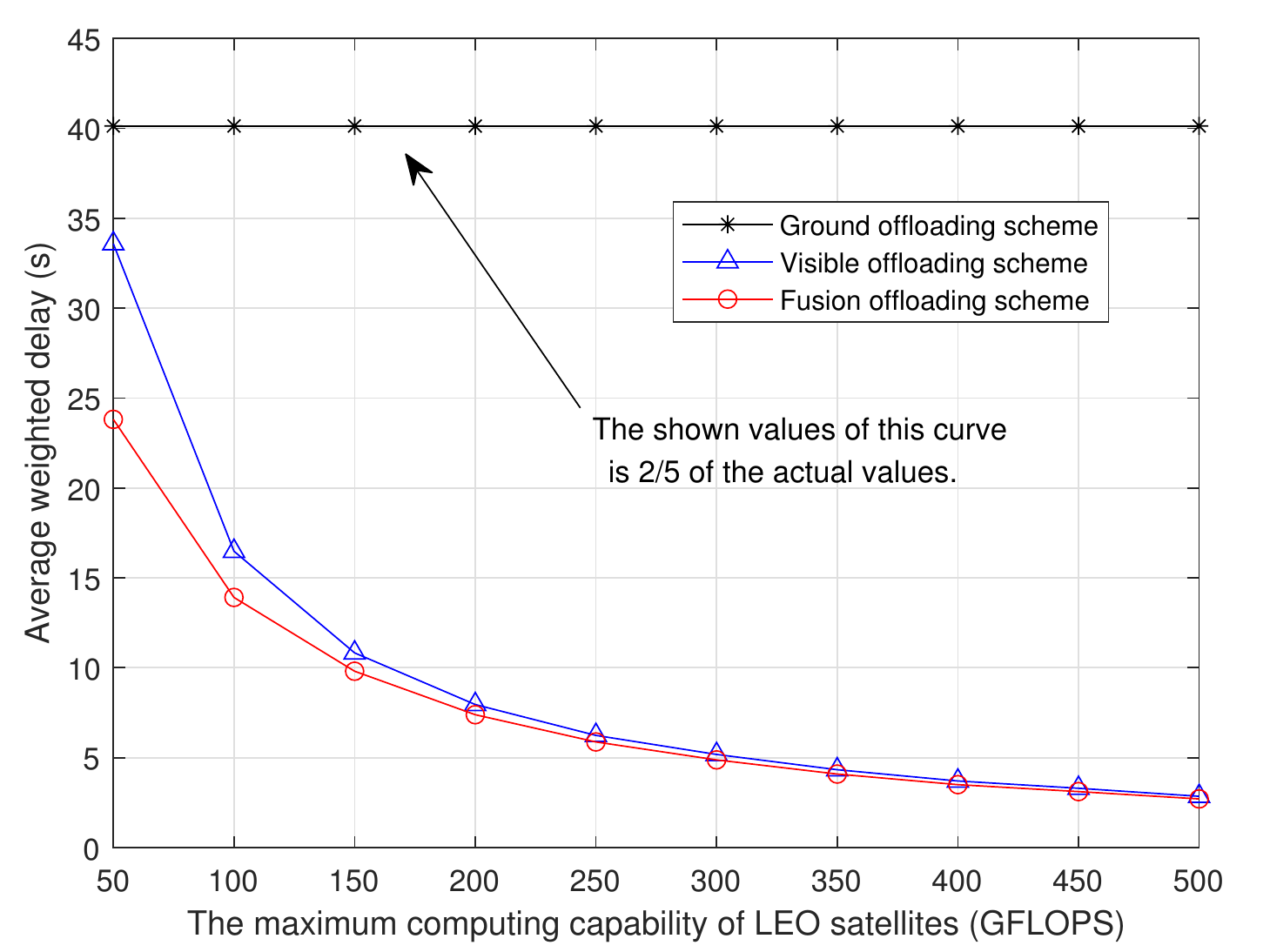}\\
	\begin{center}
		\caption{The weighted average delay versus the maximum computing capability of each LEO satellite.}\label{Change_satellite_computing_capability}
	\end{center}
\end{figure}

Fig.~\ref{Change_satellite_computing_capability} shows that the proposed fusion offloading scheme can achieve the lowest weighted average delay regardless of the change of the LEO satellite computing capability. 
Specifically, the curve of the ground offloading scheme remains constant; tasks are computed by ground servers rather than LEO satellites in this scheme. With the increase of the LEO satellite computing capability, the weighted average delays of the fusion offloading scheme and the visible offloading scheme decrease and the gap between them are getting smaller.
%Both the fusion scheme and the visible scheme could perform onboard computing. 
This is because with the increase of the LEO satellite computing capability, the time of each task occupying computing resources of LEO satellites is getting shorter, the probability of having available computing resources around is increasing, so the weighted average delays of both schemes are reduced. 
%In addition, as the computing capability increases, more and more tasks are offloaded to visible satellites for computing. \comment{I cannot get the aforementioned conclusion from the figure.}  
It is because the probability that the computation delay (including waiting delay) is smaller than the raw data transmission delay on the satellite network is increasing). Specifically, when the computing capability is strong enough, all tasks can be offloaded to visible satellites for computation, at which point the two curves overlap. \comment{Explain that you are the best.}
It can be seen from Fig.~\ref{Change_satellite_computing_capability} that the superiority of the proposed scheme is more obvious when the computing capability of satellites is relatively weak. This is because the bottleneck of the visible offloading scheme becomes more prominent (less available computing resources resulting in longer waiting delay) when the computing capability of satellites gets weaker. Therefore, the proposed scheme has significant superiority in decreasing delays, especially for networks with limited computing resources.

SGL transmission rate is also a factor affecting the performance. To investigate this factor, we change the rate from 0.2 Gbps to 10 Gbps and the remaining parameters follow Table~\ref{T1} and present the data in Fig~\ref{Change_SGL_data_transmission_rate}.

\begin{figure}[htbp]
	\centering
	\setlength{\abovecaptionskip}{-0.cm}
	\setlength{\belowcaptionskip}{-0.cm}
	\includegraphics[width=0.5\textwidth]{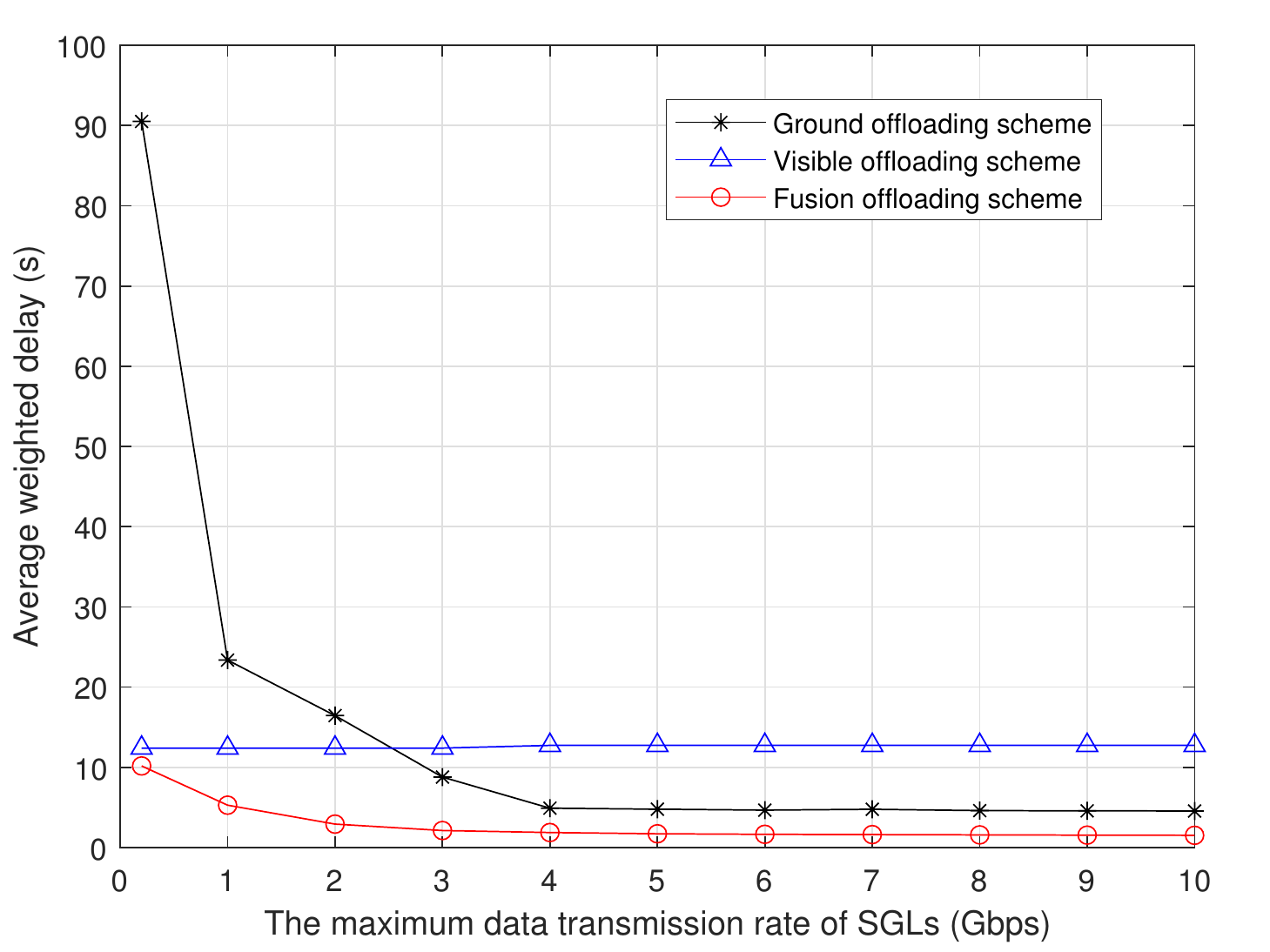}\\
	\begin{center}
		\caption{The weighted average delay versus the maximum data transmission rate of SGLs.}\label{Change_SGL_data_transmission_rate}
	\end{center}
\end{figure}

Fig~\ref{Change_SGL_data_transmission_rate} shows that the proposed fusion offloading scheme can achieve the lowest weighted average delay regardless of the change of data transmission rate of SGLs. 
Specifically, the weighted average delay of the visible offloading scheme remains horizontal as the rate of SGLs increases. It is because tasks are computed on visible satellites and only the computing results (usually only a few bites) are transmitted via SGLs.
Whereas, the weighted average delays of the fusion offloading scheme and the ground offloading scheme decrease as the rate of SGLs becomes larger. The proposed fusion offloading scheme performs better than the ground offloading scheme  because some tasks in the former are computed onboard which can save transmission delay. The difference of these two schemes decreases as the data transmission rate of SGLs increases. It is because the increase of SGL transmission rate lowers the satellite-to-ground transmission cost; as a result, more tasks are offloaded to the ground. \comment{Explain that our scheme is especially good on poor networks.}
%上次没仔细看，没有这个结论，因为当SGL极其小的时候，Visible和Fusion差距时最小的，所以这个地面就不写这个结论和相应的解释了。
%【说明黑蓝为什么会交叉】

%In terms of the successful offloading rate shown in Fig~\ref{Change_SGL_data_transmission_rate} (ii), the proposed scheme performs best. Restricted by transmitting raw data on SGLs and single-hop offloading, respectively, the successful offloading rates of the ground offloading scheme and the \textit{visible-LEO-satellite-based scheme} are both quite low, especially the latter.

%For fairness, in terms of the weighted average delay shown in Figure~\ref{Change_SGL_data_transmission_rate} (iii),  the proposed \textit{computing-aware scheme} can achieve the lowest weighted average delay no matter how the data transmission rate of SGLs changes.

\begin{mdframed}
	The proposed fusion offloading scheme can be applied to a wide range of networks. Its superior performance holds even for networks with low computing capabilities and low SGL data rates.
\end{mdframed}

\subsection{Impacts from Task Properties  (RQ3)}
Tasks with large computational requirements can be challenging for LEO-satellite-based offloading schemes.
To analyze the impact from a task's computational requirement, we simulate the computational requirement $ \widetilde{C} $ from 100 Giga floating-point operations (GFLO) to 1000 GFLO and present how the computational requirement $ \widetilde{C} $ of each subtask affects the task offloading performance in Fig~\ref{Change_task_computing_requirement}.

\begin{figure}[htbp]
	\centering
	\setlength{\abovecaptionskip}{-0.cm}
	\setlength{\belowcaptionskip}{-0.cm}
	\includegraphics[width=0.5\textwidth]{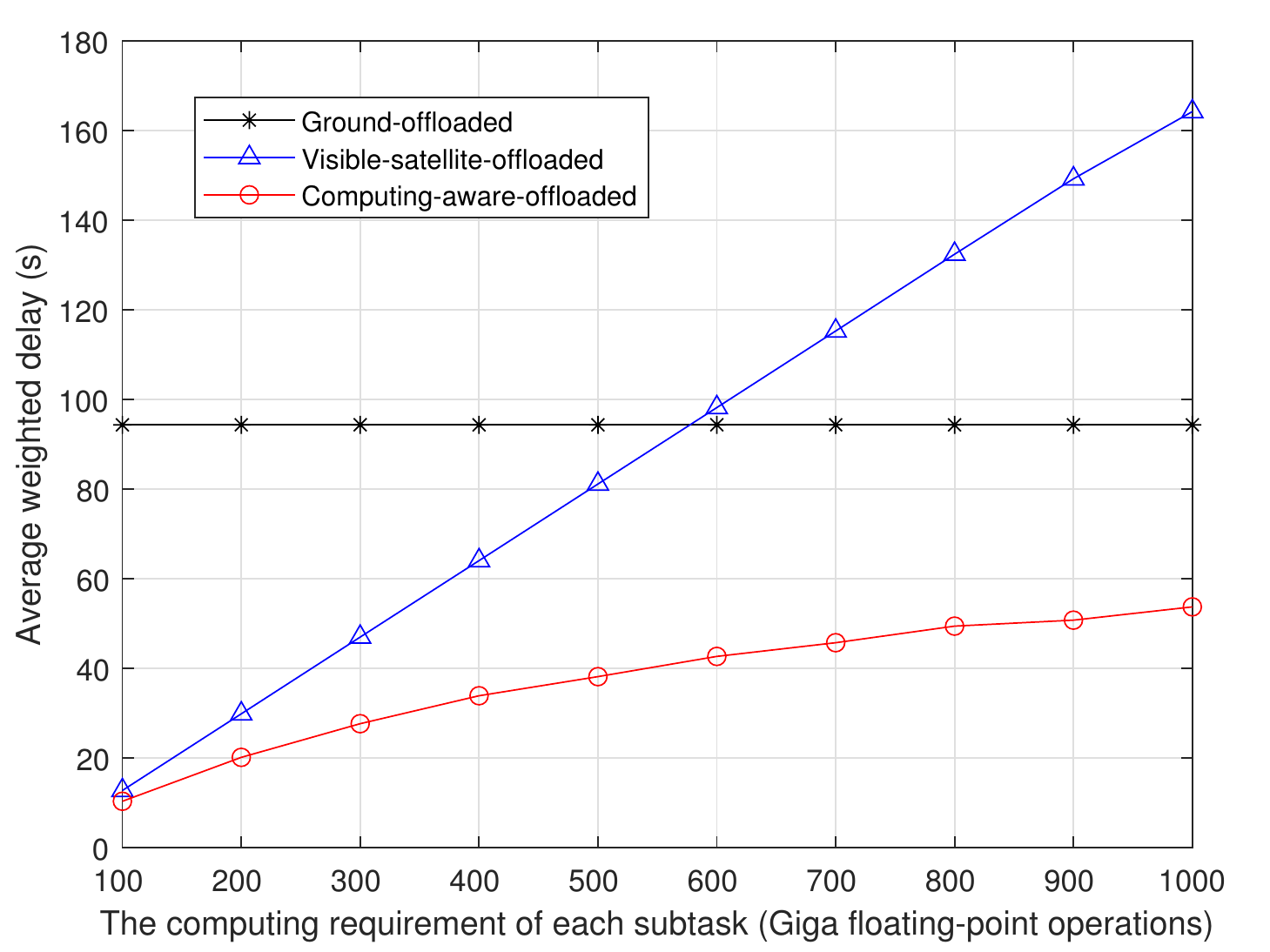}\\
	\begin{center}
		\caption{The weighted average delay versus the computational requirement $ \widetilde{C} $ of each subtask.}\label{Change_task_computing_requirement}
	\end{center}
\end{figure}

As shown in Fig~\ref{Change_task_computing_requirement}, the proposed fusion offloading scheme can achieve the lowest weighted average delay no matter how $ \widetilde{C} $ changes.
Specifically, the weighted average delay of the proposed fusion offloading scheme and the visible offloading scheme increases as the computational requirement $ \widetilde{C} $ of each subtask increases.
The root cause is that both schemes conduct onboard computing: the computation delay increases as the computational requirement increases when the satellite computing capability is fixed.
The growth rate of the visible offloading scheme is higher because this scheme can only utilize the computing resources on visible satellites; therefore, the waiting delay before performing the computation increases rapidly when the computational requirement increases.
On the contrary, The growth rate of the fusion offloading scheme is much lower. It is because this scheme can reduce the overall delay by offloading tasks to idle computing resources on invisible satellites and thus using free computing resources in non-hotspot areas. In this way, the excessive loads in hotspots can be released. In addition, since the fusion offloading scheme could offload tasks to ground servers for computation, the curve of the fusion offloading scheme approaches the curve of the ground offloading scheme, but never higher than the curve of the ground offloading scheme.
In addition, we assume that the computing capabilities of ground servers are strong enough in this paper; therefore, the computation delay is ignored when they are offloaded to ground servers for computing. Thus, the curve corresponding to the ground offloading scheme is horizontal.
It can be seen from Fig~\ref{Change_task_computing_requirement} that the superiority of the proposed scheme is getting more obvious when the computational requirement of tasks becomes higher. This is because the bottleneck of the visible offloading scheme becomes more prominent (higher computational requirements resulting in less computing resources available in the LEO satellite network) when the computational requirement of tasks becomes higher. Therefore, the proposed scheme has significant superiority in decreasing delays, especially for tasks with higher computational requirements.

Data volume could be another factor limiting the performance of the proposed scheme.
We investigate its impact by changing the data volume parameter $ \widetilde{N} $ from 1/5 GB to 1 GB and present the result in Fig~\ref{Change_subtask_data_volume}.

\begin{figure}[htbp]
	\centering
	\setlength{\abovecaptionskip}{-0.cm}
	\setlength{\belowcaptionskip}{-0.cm}
	\includegraphics[width=0.5\textwidth]{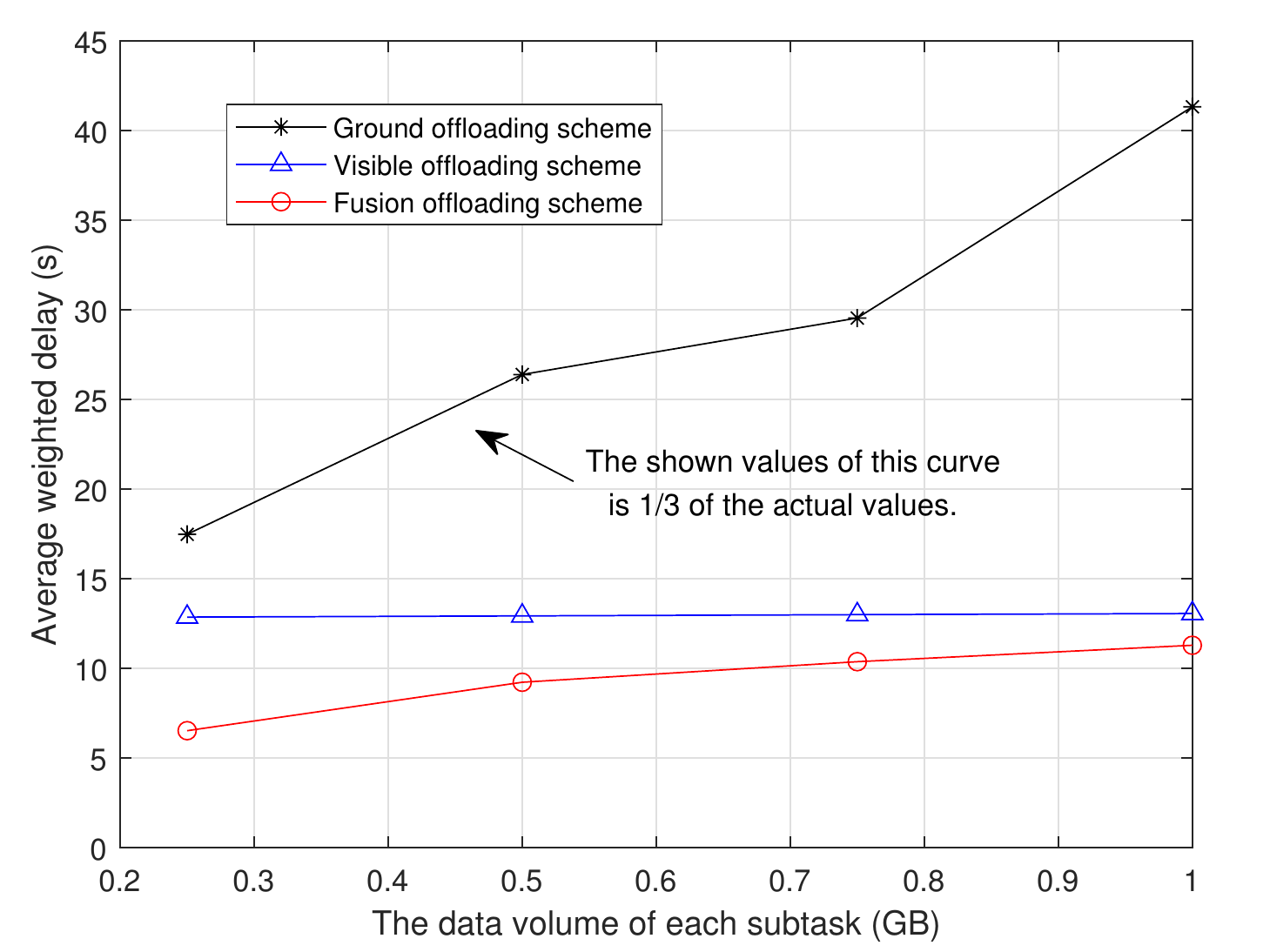}\\
	\begin{center}
		\caption{The weighted average delay versus the data volume $ \widetilde{N} $ of each subtask.}\label{Change_subtask_data_volume}
	\end{center}
\end{figure}

% 三个算法都随数据量增加而时延增加，因为都涉及原始数据的传输
% ground增长的最快，因为星上和星地都传输 原始数据，受数据量影响比较大；而visible增长极其缓慢，因为只包含一跳原始数据的传输（从源节点到其可见低轨卫星），因此受影响比较小。Fusion的增长率介于二者之间，且随着数据量的增加不断靠近visible。这是因为
As shown in Figure~\ref{Change_subtask_data_volume},
the weighted average delays of the proposed fusion offloading scheme and the benchmark offloading schemes increase as the data volume $ \widetilde{N} $ of each subtask increase.
It is because all these schemes involve the transmission of raw data. The growth rate of the ground offloading scheme is the largest because raw data is transmitted in the whole offloading process.
On the contrary, the weighted average delay of the visible offloading scheme increases extremely slowly. It is because only one-hop transmission of raw data (i.e., from the source nodes to their visible LEO satellites) are contained in this scheme.
The growth rate of the fusion offloading scheme falls in between. In addition, the gap between the fusion offloading scheme and the visible scheme is getting smaller when $ \widetilde{N} $ increases.
In the fusion offloading scheme, tasks are offloaded to invisible satellites when the increased delay of transmitting raw data to invisible satellites (i.e., cost) is less than the decrease of the computation delay (i.e., gain). As $ \widetilde{N} $ keeps increasing, the cost gradually increases whereas the gain remains the same; therefore, more and more tasks are offloaded to visible satellites. \comment{Explain that our scheme is especially good on poor networks.}
%这个位置不对，添加在上面一个图后面了
%The increase of $ \widetilde{C} $ reduces the available computational resources nearby the places where tasks are generated and makes the waiting time longer before being processed. 
%Since tasks are processed on the ground, the performance of the ground offloading scheme is not affected by the LEO satellite computing capability; therefore, all curves corresponding to this scheme in Fig~\ref{Change_task_computing_requirement} remain horizontal.
%Since the average delay only counts the successful offloads, making this criterion heavily influenced by the number of hops the raw data are transmitted in the network, it results in the average delay of the proposed scheme falling between the two benchmark schemes. This result only reflects the real offload delay under successful offload cases without considering fairness.

\begin{mdframed}
	The proposed fusion offloading scheme consistently outperforms existing offloading schemes, even for tasks with relatively large computational requirement or data volume.
\end{mdframed}

\section{Conclusions}
This paper proposes a novel task offloading scheme for LEO satellite networks, which fuses the computation and transmission. In the proposed scheme, the transmission and computation are represented uniformly by introducing virtual edges. In addition, we proposed a metagraph-based transmission and computation fusion network construction method, which enables network-wide task offloading and reduces overall delay by jointly optimizing transmission and computation.

The superior performance of the proposed task offloading scheme over conventional schemes is demonstrated by both theoretical analyses and simulations. 
By allowing multi-hop offloading, the proposed scheme can effectively relieve the pressure of computing resource shortage caused by task overload in hotspots. 
By enabling onboard computing, the proposed scheme can overcome the limitation of excessive delay generated by transmitting large amounts of raw data over SGLs.
Simulation results show that the proposed offloading scheme can always obtain the lowest weighted average delay and the highest successful offloading rate no matter how the network properties and task properties change. 

The proposed task offloading scheme can make full use of network-wide computing resources, reduce delay under actual load distribution and balances the network load. The proposed scheme is important for enhancing the performance and facilitating the application of LEO satellite network task offloading. In the future, we will work on constructing a fusion network which can model the dynamics of the LEO satellite network directly rather than with the aid of virtual nodes, and the corresponding offloading strategies.
\bibliographystyle{IEEEtran}
\bibliography{IEEEfull,Reference}

\end{document}